\setlist{nolistsep}
\newtheorem{conjecture1}{Conjecture}
\date{}
\author{Carl Feghali, Matthew Johnson, Dani\"el Paulusma
\thanks{Author supported by EPSRC (EP/K025090/1).}
}
\institute{
School of Engineering and  Computing Sciences, Durham University,\\
Science Laboratories, South Road,
Durham DH1 3LE, United Kingdom
\texttt{\{carl.feghali,matthew.johnson2,daniel.paulusma\}@durham.ac.uk}
}
\title{Kempe Equivalence of Colourings of Cubic Graphs}
\begin{document}
\maketitle

\begin{abstract}
Given a graph $G=(V,E)$ and a proper vertex colouring of $G$, a Kempe chain is a subset of~$V$ that induces a maximal connected subgraph of~$G$ in which every vertex has one of two colours.  To make a Kempe change is to obtain one colouring from another by exchanging the colours of vertices in a Kempe chain.  Two colourings are Kempe equivalent if each can be obtained from the other by a series of Kempe changes. A conjecture of Mohar asserts that, for $k \geq 3$, all $k$-colourings of $k$-regular graphs  that are not complete are Kempe equivalent. We address the case $k=3$ by showing that all $3$-colourings of a cubic graph $G$ are Kempe equivalent unless $G$ is the complete graph $K_4$ or the triangular prism. 
\end{abstract}

\section{Introduction}

Let $G = (V, E)$ denote a simple undirected graph and let $k$ be a positive integer. A \emph{$k$-colouring} of~$G$ is a mapping $\phi: V \rightarrow \{1, \dots, k\}$ such that $\phi(u) \not= \phi(v)$ if $uv \in E$. The \emph{chromatic number} of $G$, denoted by $\chi(G)$, is the smallest $k$ such that $G$ has a $k$-colouring. 

If $a$ and $b$ are distinct colours, then $G(a, b)$ denotes the subgraph of $G$ induced by vertices with colour $a$ or $b$. An \emph{$(a, b)$-component} of $G$ is a connected component of $G(a, b)$ and is known as a \emph{Kempe chain}.  A  \emph{Kempe change} is the operation of interchanging the colours of some $(a, b)$-component of $G$.  Let $C_k(G)$ be the set of all $k$-colourings of $G$. Two colourings $\alpha, \beta \in C_k(G)$ are \emph{Kempe equivalent}, denoted by $\alpha \sim_k \beta$, if each can be obtained from the other by a series of Kempe changes. The equivalence classes $C_k(G)/\sim_k$ are called \emph{Kempe classes}. 
 
Kempe changes were first introduced by Kempe in his well-known failed attempt at proving the Four-Colour Theorem.  The Kempe change method has proved to be a powerful tool with  applications to several areas such as timetables~\cite{rolf}, theoretical physics~\cite{wang1, wang2}, and Markov chains~\cite{vigoda}. The reader is referred to~\cite{mohar1, sokal} for further details. From a theoretical viewpoint, Kempe equivalence was first addressed by Fisk~\cite{fisk} who proved that all $4$-colourings of an Eulerian triangulation of the plane are Kempe equivalent. This result was later extended by Meyniel~\cite{meyniel1} who showed that all $5$-colourings of a planar graph are Kempe equivalent, and by Mohar~\cite{mohar1} who proved that all $k$-colourings, $k > \chi(G)$, of a planar graph $G$ are Kempe equivalent. Las Vergnas and Meyniel~\cite{meyniel3} extended Meyniel's result by proving that all $5$-colourings of a $K_5$-minor free graph are Kempe equivalent. Bertschi~\cite{marc} also showed that all $k$-colourings of a perfectly contractile graph are Kempe equivalent, thus answering a conjecture of Meyniel~\cite{meyniel2}.  We note that Kempe equivalence with respect to edge-colourings has also been investigated~\cite{mohar1, mohar2, ruth}.
 
Here we are concerned with a conjecture of Mohar~\cite{mohar1} on {\it $k$-regular} graphs, that is, graphs in which every vertex has degree $k$ for some $k\geq 0$. Note that, for every 2-regular graph $G$ that is not an odd cycle, it holds that $C_2(G)$ is a Kempe class.  Mohar conjectured the following (where $K_{k+1}$ is the complete graph on $k+1$ vertices).

\begin{conjecture1}[\cite{mohar1}]\label{c-1}
Let $k\geq 3$. If $G$ is a $k$-regular graph that is not $K_{k+1}$ then $C_k(G)$ is a Kempe class.
\end{conjecture1}


\tikzstyle{vertex}=[circle,draw=black, fill=black, minimum size=5pt, inner sep=1pt]
\tikzstyle{edge} =[draw,-,black,>=triangle 90]

\begin{figure}
\begin{center}
\begin{tikzpicture}[scale=0.75]

   \foreach \pos/\name  in {{(0,2)/p1}, {(4.55,2)/p2}, {(4.55,0)/p3}, {(0,0)/p4}, {(3.14,1)/p5}, {(1.41,1)/p6}}
        \node[vertex] (\name) at \pos {};
\foreach \source/ \dest  in {p1/p2,  p2/p3,p3/p4, p2/p5, p3/p5, p5/p6, p1/p6, p1/p4, p4/p6}
       \path[edge, black!50!white,  thick] (\source) --  (\dest);

\end{tikzpicture}
\end{center}
\caption{The 3-prism.}\label{fig:0}
\end{figure}
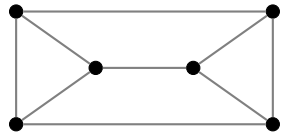

\noindent We address Conjecture~\ref{c-1} for the case $k = 3$. For this case the conjecture is known to be false. A counter-example is the 3-prism displayed in Figure~\ref{fig:0}. The fact that some $3$-colourings of the $3$-prism  are not Kempe equivalent was already observed by van den Heuvel~\cite{heuvel}.  Our contribution is that the 3-prism is the {\it only} counter-example for the case $k=3$, that is, we completely settle the case $k=3$ by proving  the following result for 3-regular graphs also known as {\it cubic} graphs.

\begin{theorem}\label{mainthm}
If $G$ is a cubic graph that is neither $K_4$ nor the $3$-prism then $C_3(G)$ is a Kempe class.
\end{theorem}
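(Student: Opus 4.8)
The plan is to argue by induction on $|V(G)|$ that every connected cubic graph other than $K_4$ and the $3$-prism has a single Kempe class of $3$-colourings; the problem reduces to the connected case by treating components one at a time, since a Kempe chain lies inside one component and $K_4$ has no $3$-colouring at all. By Brooks' theorem $C_3(G)\neq\emptyset$, so the assertion is not vacuous. The base of the induction consists of the small connected cubic graphs on which the reductions below do not bite — in particular, alongside the excluded $K_4$ and $3$-prism, the cubic graph(s) obtained from the $3$-prism by expanding a vertex into a triangle, for which the only available triangle-contraction produces the forbidden $3$-prism; these are finite checks, but they genuinely have to be carried out, because the reductions can fail precisely by landing on or just above the exceptional graphs.

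For the inductive step let $G$ be a smallest counterexample: connected, cubic, not $K_4$, not the $3$-prism, and with at least two Kempe classes. The first task is a battery of local reductions. If $G$ has a triangle $T=xyz$, then either $G=K_4$; or $T$ lies in a ``diamond'' (two triangles sharing an edge), which gives a rigidly coloured piece attached along a small edge-cut and is handled separately; or $T$ has three distinct outside neighbours $x',y',z'$, and we pass to the smaller connected cubic graph $G'=G/T$ obtained by contracting $T$ to a vertex $t$. The delicate point — and, I think, the heart of the whole proof — is that this contraction does \emph{not} induce a bijection between $C_3(G)$ and $C_3(G')$: a $3$-colouring of $G$ forces $x,y,z$ to take all three colours, while a $3$-colouring of $G'$ only forbids $x',y',z'$ from being monochromatic, so neither restriction map is naively well behaved. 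One must instead show that the correspondence is good \emph{at the level of Kempe classes}: that the two completions of a $G'$-colouring across $T$ are Kempe equivalent in $G$, and that any Kempe change in $G'$ can be simulated in $G$ (possibly after an auxiliary Kempe change supported near $t$), and conversely. Granting this, the inductive hypothesis applied to $G'$ transfers the single-class property to $G$, unless $G'\in\{K_4,\ \text{$3$-prism}\}$; but $G'=K_4$ forces $G$ to be the $3$-prism, and $G'$ being the $3$-prism forces $G$ to be one of the small graphs settled in the base case. A parallel analysis disposes of $1$- and $2$-vertex-cuts and small edge-cuts, so a smallest counterexample may be taken $3$-connected and triangle-free.

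It then remains to treat $3$-connected triangle-free cubic graphs. If $G$ is bipartite then $\chi(G)=2$, and I would show that every $3$-colouring is Kempe equivalent to a $2$-colouring — by repeatedly locating (or first creating, via a Kempe change along a short path) a vertex whose neighbourhood has become monochromatic and recolouring it toward the bipartition — and then that all $2$-colourings of a connected bipartite graph are mutually Kempe equivalent, which is straightforward: push a single vertex through the third colour, and use that the $2$-colour subgraph is connected. If $G$ is not bipartite, so $\chi(G)=3$ and one genuinely has to move between colour-permutation orbits, the argument must be local: starting from an arbitrary $3$-colouring, analyse a shortest odd cycle to produce a pair of colours $i,j$ whose $(i,j)$-subgraph is disconnected — equivalently, a Kempe change that is not a global colour swap — perform it, and iterate until reaching a colouring with a monochromatic-neighbourhood vertex, after which one finishes as in the bipartite case.

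The main obstacle is, as above, twofold. First, legitimising the triangle and small-cut reductions despite the broken colouring correspondence: this is the only place where an exceptional graph can survive, so it must be controlled finely enough to isolate the $3$-prism (and the graphs that contract onto it) exactly. Second, the non-bipartite triangle-free case: there is no evident monovariant, short cycles do not reduce cleanly because of parity, and one has to extract a non-global Kempe change from local structure and show that this always suffices to reach a flexible colouring. I expect the bulk of the work to be a careful case analysis around these two points, with the remainder being the finite base-case verifications.
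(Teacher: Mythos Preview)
Your approach differs genuinely from the paper's. Both reduce first to the $3$-connected case, but the paper then splits on whether $G$ is claw-free rather than on whether it has a triangle, and its engine is a \emph{matching lemma}: if two $k$-colourings $\alpha,\beta$ of a $3$-connected graph of maximum degree $k$ satisfy $\alpha(x)=\alpha(y)$ and $\beta(x)=\beta(y)$ for some pair $x,y$ with a common neighbour, then $\alpha\sim_k\beta$. This is proved by identifying $x$ and $y$, showing the resulting graph is $(k-1)$-degenerate, and pulling Kempe equivalence back. With this lemma, both the claw-free case (analysed around an induced net) and the claw case (analysed around an induced claw) reduce to showing that any two colourings can be pushed, by a bounded number of explicit Kempe changes, to a matching pair --- a finite local case analysis, not an iteration.

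The real gap in your plan is the triangle-free $3$-connected endpoint, and you already flag it. For the non-bipartite subcase, ``find a non-global Kempe change via a shortest odd cycle and iterate until a monochromatic-neighbourhood vertex appears'' is not a proof: a non-global Kempe change need not create such a vertex, no quantity strictly improves, and you have not shown a non-global change even exists at every step --- all three bichromatic subgraphs being connected is exactly the obstruction that isolates the $3$-prism, and nothing you wrote rules it out for larger triangle-free graphs. The bipartite subcase has the same defect in miniature: ``recolour a monochromatic-neighbourhood vertex toward the bipartition, first creating one by a Kempe change along a short path'' is not a terminating procedure as stated. What is missing is precisely a device that converts the global reachability question into a local one, and the paper's matching lemma is that device. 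Your triangle-contraction reduction, by contrast, can plausibly be made rigorous with the care you describe (and the finite checks when $G/T$ lands on $K_4$ or the $3$-prism are routine); the reduction is not where the proposal fails --- it is what happens after it.
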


\noindent
We give the proof of our result in the next section. Besides exploiting the 3-regularity, our proof also takes into account the fact that one additional graph, namely the 3-prism, is forbidden. We did not find any counter-examples for $k\geq 4$ and believe Conjecture~\ref{c-1} may well hold for $k\geq 4$. As such, new techniques are necessary to tackle the remaining cases.

Our result is an example of a type of result that has received much recent attention: that of determining the structure of a \emph{reconfiguration graph}.  A reconfiguration graph has as vertex set all solutions to a search problem and an edge relation that describes a transformation of one solution into another.  Thus Theorem~\ref{mainthm} is concerned with the reconfiguration graph of 3-colourings of cubic graph with edge relation $\sim_k$ and shows that it is connected except in two cases. To date the stucture of reconfiguration graphs of colourings has focussed~\cite{BB13, BJLPP14, BP14, CHJ06, CHJ06b, FJP14} on the case where vertices are joined by an edge only when they differ on just one colour (that is, when one colouring can be transformed into another by a Kempe change of a Kempe chain that contains only one vertex).  For a survey of recent results on reconfiguration graphs see~\cite{heuvel}.
  
\section{The Proof of Theorem~\ref{mainthm}}\label{cubic}

We first give some further definitions and terminology. Let $G=(V,E)$ be a graph. Then~$G$ is {\it $H$-free} for some graph~$H$ if $G$ does not contain an induced subgraph isomorphic to $H$.  A \emph{separator} of  $G$ is a set $S \subset V$ such that $G - S$ has more components than~$G$. We say that $G$ is \emph{$p$-connected} for some integer~$p$ if $|V|\geq p+1$ and every separator of $G$ has size at least~$p$.  Some small graphs that we will refer to are defined by their illustrations in Figure~\ref{fig:1}.

\tikzstyle{vertex}=[circle,draw=black, fill=black, minimum size=5pt, inner sep=1pt]
\tikzstyle{edge} =[draw,-,black,>=triangle 90]

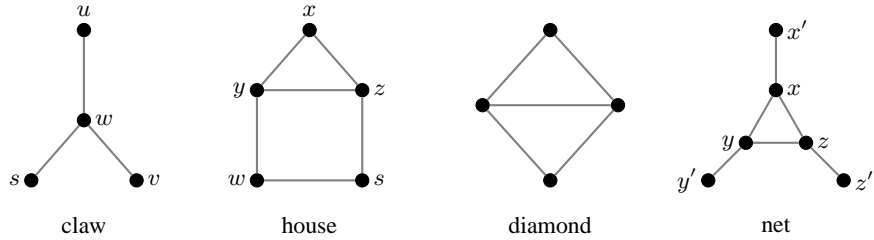
\begin{figure}
\begin{center}
\begin{tikzpicture}[scale=1]
   \foreach \pos/\name / \label / \posn / \dist in {{(1,0)/c1/u/above/2}, {(1,-1.2)/c2/w/{right}/1}, {(1.7,-2)/c3/v/{right}/1}, {(0.3,-2)/c4/s/{left}/1}}
       { \node[vertex] (\name) at \pos {};
       \node [\posn=\dist] at (\name) {$\label$};
       }
\foreach \source/ \dest  in {c2/c1,  c2/c4, c2/c3}
       \path[edge, black!50!white,  thick] (\source) --  (\dest);
       
   \foreach \pos/\name / \label / \posn / \dist  in {{(4,0)/h1/x/{above}/2}, {(3.3,-0.8)/h2/y/{left}/1}, 
   {(4.7,-0.8)/h3/z/{right}/1}, {(3.3,-2)/h4/w/{left}/1},{(4.7,-2)/h5/s/{right}/1}}
      { \node[vertex] (\name) at \pos {};
       \node [\posn=\dist] at (\name) {$\label$};
       }
\foreach \source/ \dest  in {h1/h2,  h1/h3, h2/h3, h2/h4, h4/h5, h5/h3}
       \path[edge, black!50!white,  thick] (\source) --  (\dest);     

   \foreach \pos/\name / \label in {{(7.2,0)/n1/x}, {(7.2,-2)/n2/y}, {(6.3,-1)/n3/w}, {(8.1,-1)/n4/z}}
       { \node[vertex] (\name) at \pos {};}
       
\foreach \source/ \dest  in {n1/n3,  n1/n4, n2/n3, n2/n4, n3/n4}
       \path[edge, black!50!white,  thick] (\source) --  (\dest);

\node at (1,-2.6) { claw};
\node at (4,-2.6) { house};
\node at (7.2,-2.6) {diamond};

 \begin{scope}[xshift=3cm]
   \foreach \pos/\name / \label / \posn / \dist in {{(6.3,-2)/n1/y'/{left}/1}, {(6.8,-1.5)/n2/y/{left}/1}, {(7.2,-0.8)/n3/x/{right}/1}, {(7.2,0)/n4/x'/{right}/1}, {(7.6,-1.5)/n5/z/{right}/1}, {(8.1,-2)/n6/z'/{right}/1}}
       { \node[vertex] (\name) at \pos {};
       \node [\posn=\dist] at (\name) {$\label$};
       }
       
\foreach \source/ \dest  in {n1/n2,  n2/n3, n3/n4, n2/n5, n3/n5, n5/n6}
       \path[edge, black!50!white,  thick] (\source) --  (\dest);

       \node at (7.2,-2.6) { net};
       
\end{scope}
\end{tikzpicture}
\end{center}
\caption{A number of special graphs used in our paper.}\label{fig:1}
\end{figure}

Besides three new lemmas, we will need the aforementioned result of van den Heuvel, which follows from the fact that for the 3-prism $T$, the subgraphs$T(1, 2)$, $T(2, 3)$ and $T(1, 3)$ are connected so that the number of Kempe classes is equal to the number of different 3-colourings of $T$ up to colour permutation, which is two.

\begin{lemma}[\cite{heuvel}]\label{pprism}
If $G$ is the 3-prism then $C_3(G)$ consists of two Kempe classes.
\end{lemma}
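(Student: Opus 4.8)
The plan is to prove that on the $3$-prism $T$ every Kempe change is simply a global permutation of two colour classes, and then to count the $3$-colourings of $T$ up to such permutations. Write $V(T)=\{a_1,a_2,a_3\}\cup\{b_1,b_2,b_3\}$, where $a_1a_2a_3$ and $b_1b_2b_3$ are the two triangles of $T$ and $a_ib_i\in E(T)$ for $i\in\{1,2,3\}$. Let $\phi$ be any $3$-colouring of $T$. Since each triangle is a clique on three vertices, $\phi$ restricts to a bijection onto $\{1,2,3\}$ on each triangle; in particular all three colours occur, and for any two distinct colours $p,q$ the graph $T(p,q)$ has exactly four vertices, two in each triangle, the two lying in a common triangle being adjacent.

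The key step is to check that $T(p,q)$ is connected for every $\phi$ and every pair $\{p,q\}$. Let $r$ be the remaining colour; the vertex coloured $r$ in the first triangle is some $a_i$ and the vertex coloured $r$ in the second triangle is some $b_j$. A rung $a_kb_k$ lies outside $T(p,q)$ only when $k\in\{i,j\}$, so at least one of the three rungs lies inside $T(p,q)$, and together with the two triangle edges this makes $T(p,q)$ connected. Hence $T(p,q)$ has a single $(p,q)$-component, namely the set of all vertices coloured $p$ or $q$, and interchanging its colours simply applies the transposition $(p\ q)$ to $\phi$. As the transpositions generate the symmetric group on $\{1,2,3\}$, two $3$-colourings of $T$ are Kempe equivalent precisely when one arises from the other by permuting colour classes; so the Kempe classes are exactly the orbits of $C_3(T)$ under the action of $S_3$ on the colours.

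It then remains to count these orbits. By applying a colour permutation we may assume $\phi(a_i)=i$ for all $i$; the only colour permutation fixing this is the identity, so the number of orbits equals the number of possible restrictions of $\phi$ to $\{b_1,b_2,b_3\}$. Such a restriction is a bijection $\psi\colon\{b_1,b_2,b_3\}\to\{1,2,3\}$ with $\psi(b_i)\neq i$ for each $i$, that is, a derangement of $\{1,2,3\}$, and there are exactly two of these. Therefore $C_3(T)$ consists of exactly two Kempe classes. I do not expect a genuine obstacle here: the only point needing a little care is the connectivity claim for $T(p,q)$, which the rung count above settles uniformly, together with the elementary derangement bookkeeping.
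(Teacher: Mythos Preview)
Your proof is correct and follows essentially the same line as the paper: you show that every bichromatic subgraph $T(p,q)$ is connected, so each Kempe change is a global transposition of two colour classes, and hence the Kempe classes are the $S_3$-orbits on $C_3(T)$, of which there are two. The paper only sketches this in a single sentence before stating the lemma; your rung-counting argument for connectivity and your derangement count make explicit exactly the steps the paper leaves implicit.
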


\begin{lemma}\label{l1}
If $G$ is a cubic graph that is not $3$-connected then $C_3(G)$ is a Kempe~class.
\end{lemma}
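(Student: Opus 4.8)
The plan is to induct on $|V(G)|$: treating this lemma as the non-$3$-connected case of the induction for Theorem~\ref{mainthm}, I may assume that $C_3(G')$ is a Kempe class for every cubic graph $G'$ with $|V(G')|<|V(G)|$ that is neither $K_4$ nor the $3$-prism, while Lemma~\ref{pprism} describes the $3$-prism. I may also assume $G$ is connected. Since $G$ is cubic, a cut vertex must be incident with a bridge (one of its three edges leads to a component of $G$ minus the cut vertex that is attached by just that edge), and a $2$-vertex separator produces two edges whose removal disconnects $G$; so $G$ has a bridge or a $2$-edge-cut $\{e_1,e_2\}$. Deleting the cut edge(s) splits $G$ into connected subgraphs $P$ and $Q$ with $e_i=p_iq_i$, $p_i\in V(P)$, $q_i\in V(Q)$; for a $2$-edge-cut $p_1\neq p_2$ and $q_1\neq q_2$ (else $G$ has a smaller cut). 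In $P$ the endpoints of the cut edges have degree $2$ and all other vertices have degree $3$, and symmetrically for $Q$.

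I would then pass to smaller cubic graphs. For a $2$-edge-cut put $G_P:=P+p_1p_2$ and $G_Q:=Q+q_1q_2$; for a bridge, replace the deficient side --- say $P$, with its unique degree-$2$ vertex $x$ and neighbours $x_1,x_2$ --- by $G_P:=(P-x)+x_1x_2$, and likewise for $Q$. These are cubic with fewer vertices than $G$. A $3$-colouring $\phi$ of $G$ restricts to proper colourings of $G_P$ and $G_Q$ exactly when it is \emph{admissible}: the two endpoints of the cut edges get distinct colours on each side (for a bridge, $\phi(x_1)\neq\phi(x_2)$ and the analogue on $Q$); and compatible colourings of $G_P,G_Q$ glue back to a colouring of $G$ (in the bridge case, colour $x$ with the colour missing at $\{x_1,x_2\}$). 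The first substantive step is to show that every $3$-colouring of $G$ is Kempe equivalent to an admissible one. Here I would exploit the fact that in a cubic graph every vertex has two equally coloured neighbours, which always supplies a small --- often single-vertex --- $(a,b)$-chain whose switch destroys the $(a,b)$-connection forcing $\phi(p_1)=\phi(p_2)$ (or $\phi(q_1)=\phi(q_2)$); the delicate point is that such a switch must not recreate the defect on the other side, which forces one to track the colours carried by $e_1$ and $e_2$.

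Finally I would transfer Kempe equivalence from the pieces to $G$. Given admissible colourings $\alpha,\beta$ of $G$, and unless one of $G_P,G_Q$ is $K_4$ or the $3$-prism (small cases I would treat directly, using Lemma~\ref{pprism} and the rigidity of the colourings of these graphs --- e.g.\ $G_P=K_4$ means $P=K_4$ minus an edge, whose two degree-$2$ vertices always receive a common colour), the induction hypothesis gives $\alpha|_P\sim_3\beta|_P$ in $G_P$ and $\alpha|_Q\sim_3\beta|_Q$ in $G_Q$. It remains to realise these reconfigurations inside $G$: a Kempe change of $G_P$ whose chain avoids the new edge $p_1p_2$ is literally a Kempe change of $G$ that fixes $Q$, whereas one whose chain uses $p_1p_2$ must be emulated in $G$ by a bounded sequence of Kempe changes routed through $Q$ --- and since $C_3(G_Q)$ is a Kempe class, the collateral changes to $Q$ can afterwards be undone. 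Interleaving the work on the two sides --- first bringing $\alpha$ into agreement with $\beta$ on $P$, then on $Q$, each time absorbing the side-effects on the other part --- should then yield $\alpha\sim_3\beta$. The main obstacle is precisely this faithful transfer across the cut: both the reduction to an admissible colouring and the emulation of chains through the deleted or added edges require a case analysis organised by the structure of the cut and the up to three colours it can show. The remaining degenerate configurations --- where $p_1p_2$ or $q_1q_2$ is already an edge of $G$ (so the reduced graph is not simple), where $P$ or $Q$ is too small for the reduction to decrease $|V(G)|$, or where a reduced graph is $K_4$ or the $3$-prism --- all involve bounded-size graphs and can be disposed of by hand.
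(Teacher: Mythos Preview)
Your approach is genuinely different from the paper's, and the transfer step contains a real gap.

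The paper does not pass to smaller \emph{cubic} graphs at all, and uses no induction. It takes a minimum vertex separator $S$ and writes $G=G_1\cup G_2$ with $G_1\cap G_2=S$; since every vertex of $S$ has degree at most~$2$ in each $G_i$, both $G_i$ are $2$-degenerate, and Lemma~\ref{degprop1} makes $C_3(G_i)$ a Kempe class directly. If $S$ is a clique, Lemma~\ref{ldeg} (gluing across a complete intersection) finishes immediately. If $S=\{x,y\}$ with $xy\notin E$, one adds the edge $e=xy$: the pieces $G_i+e$ are still $2$-degenerate, Lemma~\ref{ldeg} shows $C_3(G+e)$ is a Kempe class, and Lemma~\ref{ledge} (restriction of Kempe equivalence to a subgraph) pulls this back to $G$, settling all colourings with $\alpha(x)\neq\alpha(y)$. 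A two-line local argument then shows every colouring with $\alpha(x)=\alpha(y)$ is Kempe equivalent to one with distinct colours on $x,y$. There is no need to keep the pieces cubic, no exceptional $K_4$ or $3$-prism to treat, and no emulation of chains across a cut.

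In your plan the emulation step does not work as stated. Even a Kempe chain of $G_P$ that avoids the added edge $p_1p_2$ need not be a Kempe chain of $G$: if the chain contains $p_1$ (but not $p_2$), then in $G$ the $(a,b)$-component at $p_1$ may continue through the cut edge $p_1q_1$ into $Q$, so the assertion ``chains avoiding $p_1p_2$ are literally Kempe changes of $G$ fixing $Q$'' is false. For chains that do use $p_1p_2$ you propose to route through $Q$ and then undo the collateral damage using that $C_3(G_Q)$ is a Kempe class; but the undoing must be carried out by Kempe changes of $G$, and these can leak back into $P$ and destroy the progress already made there---this is a genuine feedback problem, not a bookkeeping detail. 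You flag it as ``the main obstacle'' but do not resolve it. A further structural issue is that your induction hypothesis is the full Theorem~\ref{mainthm} for smaller graphs, so this is not a standalone proof of Lemma~\ref{l1} but a reorganisation of the entire argument, which in particular forces you to handle by hand every case where $G_P$ or $G_Q$ degenerates to $K_4$ or the $3$-prism. The paper's route via $2$-degenerate pieces and Lemmas~\ref{degprop1},~\ref{ldeg},~\ref{ledge} sidesteps all of these difficulties.
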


\begin{lemma}\label{l2}
If $G$ is a $3$-connected cubic graph that is claw-free but that is neither $K_4$ nor the $3$-prism then $C_3(G)$ is a Kempe class. 
\end{lemma}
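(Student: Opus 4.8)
First I would show that $G$ is the ``triangle‑expansion'' of a small cubic multigraph. Because $G$ is cubic and claw‑free, the three neighbours of any vertex are not pairwise non‑adjacent, so every vertex of $G$ lies on a triangle. If $G$ contained an induced diamond, its two degree‑three vertices (within the diamond) would have all their neighbours inside the diamond, so the remaining two diamond‑vertices would separate them from the rest of $G$ unless $|V(G)|=4$; since $G$ is $3$‑connected and not $K_4$, this is impossible, so $G$ is diamond‑free. Hence no two triangles of $G$ share a vertex (that would force either a diamond or a vertex of degree more than three), so the triangles partition $V(G)$, every vertex has exactly one neighbour outside its triangle, and the inter‑triangle edges form a perfect matching $M$. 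Contracting each triangle to a single vertex gives a cubic multigraph $H$: it is loopless because $G$ is simple, it has $|V(H)|\ge 4$ (it is even, and $|V(H)|=2$ gives precisely the $3$‑prism, which is excluded), and it is $3$‑edge‑connected (a $2$‑edge‑cut of $H$ would give a $2$‑separator of $G$). So it remains to prove the statement when $G$ is the triangle‑expansion of a $3$‑edge‑connected cubic multigraph $H$ on at least four vertices.

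\textbf{Step 2: Reformulate colourings and Kempe changes on $H$.} Any proper $3$‑colouring $\phi$ of $G$ is rainbow on each triangle, so, identifying each vertex of the triangle $T_v$ with the unique $M$‑edge (equivalently, edge of $H$) incident with it, $\phi$ amounts to a bijection $\pi_v\colon\{\text{edges of }H\text{ at }v\}\to\{1,2,3\}$ for each $v\in V(H)$ with $\pi_u(e)\ne\pi_v(e)$ for every edge $e=uv$ of $H$. Fix a target colouring $\beta$; it suffices to show $\phi\sim_3\beta$ for every $\phi$. For colours $i\ne j$, the subgraph $G(i,j)$ meets each triangle in exactly its $ij$‑edge and meets each $M$‑edge precisely when its two ends are coloured $i$ and $j$; thus every $(i,j)$‑component is a path or cycle that alternates triangle‑edges and $M$‑edges, hence corresponds to a walk in $H$ traversing the $ij$‑edge of each triangle it meets, and performing the Kempe change replaces $\pi_v$ by $(ij)\circ\pi_v$ at exactly the vertices $v$ on that walk, leaving all other $\pi_w$ unchanged. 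Writing $\tau_v\in S_3$ for the permutation with $\beta|_{T_v}=\tau_v\circ\phi|_{T_v}$, the family $(\tau_v)_v$ encodes $\phi$, is all‑identity exactly when $\phi=\beta$, and a Kempe change right‑multiplies $\tau_v$ by one fixed transposition for all $v$ on some walk of $H$ (with the set of available walks depending on the current colouring).

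\textbf{Step 3: Transitivity -- the crux.} It remains to show these moves connect every realisable family $(\tau_v)_v$ to the all‑identity one. The plan is to exhibit, for each vertex $v$ of $H$ and each transposition $t$, a sequence of Kempe changes whose net effect is $\tau_v\mapsto\tau_v t$ with $\tau_w$ unchanged for all $w\ne v$; from such ``single‑vertex'' moves, transitivity is immediate. To build them I would work along a spanning tree of $H$, from the leaves inwards, using $3$‑edge‑connectedness to arrange each corrective $(i,j)$‑chain so that its unavoidable side effects fall only on not‑yet‑processed vertices, and using a spare vertex (available since $|V(H)|>2$) as a ``scratch pad'' to discharge the residual freedom living in the cycle space of $H$. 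This is exactly where the exclusion of the $3$‑prism is used: when $|V(H)|=2$, every non‑trivial $(i,j)$‑chain meets \emph{both} triangles with the \emph{same} transposition, so $\tau_u\tau_v^{-1}$ is a Kempe invariant -- which accounts for the two classes of Lemma~\ref{pprism} -- whereas as soon as $|V(H)|\ge 4$ there is room for a chain to touch one triangle without another, and the single‑vertex moves can be realised. Constructing these moves, i.e.\ organising the Kempe changes so that their effects genuinely decouple (and checking the sign bookkeeping stays consistent: a single change multiplies $\prod_v\operatorname{sgn}\tau_v$ by $(-1)^{\ell}$, where $\ell$ is the walk length, and a triangle‑swap has $\ell=1$, so nothing obstructs), is the main difficulty; I would expect a short case analysis, perhaps organised as an induction on $|V(H)|$ with base case $H=K_4$ (so $G$ is the triangle‑expansion of $K_4$) together with the standard edge‑insertion operation that generates all $3$‑edge‑connected cubic multigraphs from $K_4$.
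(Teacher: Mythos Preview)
Your Steps~1 and~2 are correct and give a clean structural picture: every $3$-connected cubic claw-free graph other than $K_4$ and the $3$-prism is the triangle-expansion of a $3$-edge-connected cubic loopless multigraph $H$ on at least four vertices, and $3$-colourings of $G$ translate into systems of local bijections $(\pi_v)_{v\in V(H)}$ on which an $(i,j)$-Kempe change acts by right-multiplying each affected $\pi_v$ by the transposition $(ij)$ along a walk in $H$. This reformulation is sound and is genuinely different from the paper's approach.

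The gap is Step~3. You do not construct the ``single-vertex moves''; you only assert that you ``would expect a short case analysis'' or an induction via edge-insertion on $H$. But this is precisely the heart of the lemma, and nothing you have written shows it can be done. The difficulty is real: the set of available $(i,j)$-walks depends on the current colouring, so one cannot simply pick a walk hitting $v$ and missing all other vertices; every move one makes changes which moves are subsequently available. Your parity remark (that a single-triangle chain has $\ell=1$) shows there is no global sign obstruction, but it does not produce the moves. The spanning-tree/scratch-pad idea is suggestive but, as stated, is not an argument: you have not explained how $3$-edge-connectivity lets you route a corrective chain so that its side effects fall only on unprocessed vertices, nor how the ``spare vertex'' absorbs the residual. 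As it stands, Step~3 is a programme, not a proof.

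For comparison, the paper avoids this transitivity problem entirely. It first proves (your Step~1, essentially) that $G$ contains an induced net $N$ with t-vertices $x,y,z$ and p-vertices $x',y',z'$. The key tool is Lemma~\ref{lidentify1}: if two $3$-colourings \emph{match} (agree in giving the same colour to some pair of vertices with a common neighbour), they are Kempe equivalent, because identifying that pair yields a $2$-degenerate graph. The proof then has two cases. If two p-vertices are coloured alike under $\alpha$ (or under $\beta$), one Kempe change on the t-triangle produces a match. If all three p-vertices have distinct colours under both $\alpha$ and $\beta$, one shows that unless the three $(i,j)$-paths $P_{12},P_{23},P_{13}$ joining pairs of p-vertices all exist, a single Kempe change already equalises two p-vertices; and if they do all exist, two explicit Kempe changes (exchange $P_{12}$, then exchange the new $(2,3)$-path through $y'$) force two p-vertices to coincide. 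So the paper replaces your abstract transitivity argument with a concrete two-move computation on the net, leaning on the matching lemma to finish. If you want to push your approach through, the honest next step is to actually build the single-vertex moves, at least for the base case $H=K_4$, and to check that the edge-insertion induction preserves the property; until that is done, the proposal does not establish the lemma.
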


\begin{lemma}\label{l3}
If $G$ is a $3$-connected cubic graph that is not claw-free then $C_3(G)$ is a Kempe class.
\end{lemma}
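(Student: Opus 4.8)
The plan is to prove that any two $3$-colourings of $G$ are Kempe equivalent by reducing both to a common type of colouring. Note first that the hypothesis already excludes the two exceptional graphs: in $K_4$ and in the $3$-prism the neighbourhood of every vertex contains an edge, so neither is claw-free, and I do not have to treat them separately. Fix an induced claw with centre $u$; thus $N(u)=\{x,y,z\}$ is independent and $u$ has no other neighbour. The key source of flexibility is the following: whenever a colouring $\phi$ satisfies $\phi(x)=\phi(y)=\phi(z)=b$ and $\phi(u)=a$, the singleton $\{u\}$ is an $(a,c)$-Kempe chain, where $c\notin\{a,b\}$, so one may switch $\phi(u)$ freely between $a$ and $c$ by a Kempe change. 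Call a $3$-colouring \emph{settled} if $x$, $y$ and $z$ all receive the same colour. The proof breaks into Step~1, that every $3$-colouring is Kempe equivalent to a settled colouring, and Step~2, that all settled colourings lie in one Kempe class; these together give the lemma.

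For Step~1, let $\phi$ be a colouring that is not settled. After renaming colours we may assume $\phi(u)=1$, $\phi(z)=3$ and $\phi(x)=\phi(y)=2$ (the case with two leaves coloured $3$ and one coloured $2$ is symmetric). The natural attempt is to recolour $z$ to colour $2$: if the $(2,3)$-component $D$ containing $z$ avoids $x$ and $y$, then swapping $D$ does exactly this and does not disturb $x$ and $y$, producing a settled colouring. The trouble is that $D$ may contain $x$ or $y$, and simply iterating swaps of $(2,3)$-components through $z$ can loop indefinitely without ever isolating $z$ from the other two leaves — this is precisely the obstruction that makes the $3$-prism a genuine exception. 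To break out of such a loop I would interleave Kempe changes on chains through $u$: for example, the $(1,3)$-component of $u$ contains $u$ and $z$ but not $x$ or $y$, so swapping it recolours $u$ and $z$ while fixing $x$ and $y$, changing the colour of $z$ relative to that of $x$ and $y$; repeating and combining these moves one aims to reach a settled colouring. Turning this into a proof is, I expect, the main obstacle of the lemma: it requires a careful analysis of all Kempe chains meeting $\{u,x,y,z\}$, showing that in a $3$-connected graph at least one of them can be swapped to make progress, and it is here that $3$-connectedness together with the fact that the claw is induced must be used to exclude a rigid configuration in which every available chain drags a second leaf along.

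For Step~2, take settled colourings $\phi$ and $\phi'$, say with $\phi(u)=1$ and $\phi(x)=\phi(y)=\phi(z)=2$ (the leaves cannot share colour $1$), and similarly for $\phi'$. I would use the freedom at $u$ — switching $\phi(u)$ between $1$ and $3$ — together with swaps of the $(1,2)$- and $(1,3)$-components through $u$, which realise controlled changes on $\{u,x,y,z\}$ and, crucially, let one check that an arbitrary permutation of the three colour names can be simulated by Kempe changes (using in addition a chain that sweeps the whole connected graph), so that the renaming performed in Step~1 is harmless. To compare the colourings on $G-\{u,x,y,z\}$, the cleanest route I see is a reduction: delete $u$ and suppress the resulting degree-two vertices $x$, $y$, $z$ to obtain a smaller cubic graph $G'$, and induct on $|V(G)|$ via Theorem~\ref{mainthm}, handling directly the cases where $G'$ is not simple, not $3$-connected, or is $K_4$ or the $3$-prism. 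The subtlety here — which also has to be dealt with — is that a settled colouring of $G$ need not restrict to a proper colouring of $G'$ (the two outer neighbours of a leaf can share a colour), so before applying the reduction one must either first reconfigure those outer neighbours by Kempe changes or adapt the reduction accordingly; once this is arranged, a Kempe change in $G'$ lifts to a sequence of Kempe changes in $G$ after re-inserting $x$, $y$, $z$ and $u$ with their forced or flexible colours. Combining Steps~1 and~2, $C_3(G)$ is a single Kempe class.
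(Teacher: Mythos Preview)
Your proposal is an outline, not a proof, and the gaps are exactly where the real work lies.

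\textbf{Step~1 is not established.} You yourself write that ``turning this into a proof is, I expect, the main obstacle of the lemma.'' It is. Showing that one can always reach a \emph{settled} colouring (all three leaves alike) by Kempe changes requires a case analysis of the Kempe chains through the claw that is at least as delicate as what the paper actually does. Nothing in your sketch explains how $3$-connectedness rules out the rigid configurations you allude to; the informal plan of ``interleaving'' $(1,3)$-swaps at $u$ with $(2,3)$-swaps at $z$ does not obviously terminate or make progress.

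\textbf{Step~2 is circular and the reduction is fragile.} You propose to induct via Theorem~\ref{mainthm}, but Theorem~\ref{mainthm} is deduced from Lemma~\ref{l3}; you would need to restructure the whole argument as an induction on $|V(G)|$ from the outset. Even granting that, the graph $G'$ obtained by deleting $u$ and suppressing $x,y,z$ can acquire multi-edges, lose $3$-connectedness, or become $K_4$ or the $3$-prism, and a settled colouring of $G$ need not restrict to a proper colouring of $G'$ --- you note all of this but do not resolve any of it. Lifting Kempe changes back from $G'$ to $G$ is also not automatic.

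\textbf{What the paper does instead.} The paper avoids both difficulties by a different reduction: it never aims for all three leaves coloured alike. Instead, via Lemma~\ref{lidentify1} (identify two vertices with a common neighbour that are coloured alike under both colourings; the resulting graph is $2$-degenerate, hence all its $3$-colourings are Kempe equivalent), any two colourings that \emph{match} --- i.e.\ agree on which pair of leaves share a colour --- are already Kempe equivalent. Lemma~\ref{lem-W} then reduces the whole lemma to showing that from any colouring one can reach, by Kempe changes, a colouring in which a \emph{different} pair of leaves agrees. The body of the proof is the (still substantial) case analysis establishing exactly that. This sidesteps your Step~2 entirely and replaces your Step~1 target with a strictly weaker one.

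Minor slip: you write that ``neither $K_4$ nor the $3$-prism is claw-free''; you mean the opposite --- both \emph{are} claw-free, which is why the hypothesis of the lemma excludes them.
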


Observe that Theorem~\ref{mainthm} follows from the above lemmas, which form a case distinction. Hence it suffices to prove Lemmas~\ref{l1}--\ref{l3}.  These proofs form the remainder of the paper.

\subsection{Proof of Lemma~\ref{l1}}\label{s-lemma1}

In order to prove Lemma~\ref{l1} we need three auxiliary results and one more definition: a graph $G$ is {\it $d$-degenerate} if every induced subgraph of $G$ has a vertex with degree at most $d$.

\begin{lemma}[\cite{meyniel3, mohar1}]\label{degprop1}
Let $d$ and $k$ be any two integers with $d\geq 0$ and $k\geq d+1$.  If $G$ is a $d$-degenerate graph then $C_k(G)$ is a Kempe class. 
\end{lemma}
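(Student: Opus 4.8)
The natural approach is induction on $|V(G)|$, deleting a vertex of small degree. The base case $|V(G)|\le 1$ is immediate, since any two $k$-colourings of a single vertex are Kempe equivalent via singleton Kempe changes. For the inductive step I would fix a vertex $v$ of $G$ of degree at most $d$ (one exists because $G$ itself, being an induced subgraph of the $d$-degenerate graph $G$, has such a vertex), and set $H=G-v$. Every induced subgraph of $H$ is an induced subgraph of $G$, so $H$ is again $d$-degenerate, and the induction hypothesis gives that $C_k(H)$ is a Kempe class. Now take any $\alpha,\beta\in C_k(G)$. It suffices to transform $\alpha$, by Kempe changes in $G$, into a colouring $\alpha^\star$ with $\alpha^\star|_H=\beta|_H$: then $\alpha^\star$ and $\beta$ differ at most at $v$, and if they do differ there then, since $\beta$ is proper, no neighbour of $v$ has colour $\beta(v)$, so $\{v\}$ is an $(\alpha^\star(v),\beta(v))$-component of $G$ whose recolouring turns $\alpha^\star$ into $\beta$.

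To construct $\alpha^\star$ I would apply the induction hypothesis to $H$ to obtain a sequence of Kempe changes in $H$ from $\alpha|_H$ to $\beta|_H$, and lift the sequence one change at a time. The lifting claim is: given a colouring $\delta$ of $G$ and a Kempe change in $H$ that exchanges colours $a,b$ on an $(a,b)$-component $D$ of $H$, there is a colouring $\delta'$ of $G$, Kempe equivalent to $\delta$ in $G$, whose restriction to $H$ is $\delta|_H$ with $D$ flipped. Starting from $\delta=\alpha$ and iterating along the sequence yields $\alpha^\star$. When $\delta(v)\notin\{a,b\}$, or when $\delta(v)\in\{a,b\}$ but $v$ is not adjacent to $D$, the set $D$ is itself an $(a,b)$-component of $G$ under $\delta$ (any walk in $G(a,b)$ leaving $D$ must pass through $v$, which here is either outside $G(a,b)$ or not attached to $D$), so we simply swap $D$ and are done.

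The one genuinely delicate case, and the step I expect to be the main obstacle, is $\delta(v)\in\{a,b\}$, say $\delta(v)=a$, with $v$ adjacent to $D$. Here $D$, the vertex $v$, and possibly further $(a,b)$-components of $H$ hanging off $v$ merge into a single large $(a,b)$-component of $G$, so swapping that whole component would disturb $H$ far more than the single flip of $D$ we want. My fix is first to try to move $v$ out of the way: since $\delta(v)=a$ and $v$ has a neighbour in $D$ of colour $b$, the colours forbidden at $v$ are exactly the at most $d$ colours on $N(v)$ together with $a$, that is, at most $d+1\le k$ colours; if this count is strictly less than $k$ there is a colour $c\notin\{a,b\}$ absent from $N(v)$, and the singleton Kempe change recolouring $v$ to $c$ lands us in the easy case above, after which swapping $D$ completes the lift. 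The count equals $k$ only when $v$ has degree exactly $d=k-1$ and its $d$ neighbours receive $d$ distinct colours; but then $v$ has a unique neighbour of colour $b$, necessarily a vertex of $D$, and no neighbour of colour $a$, so the $(a,b)$-component of $G$ containing $D$ is precisely $D\cup\{v\}$. Swapping $D\cup\{v\}$ flips exactly $D$ inside $H$ while recolouring $v$ from $a$ to $b$, and this is proper because $v$'s former colour-$b$ neighbour now carries colour $a$. This settles the lifting claim, and with it the induction.
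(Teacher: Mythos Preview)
Your proof is correct. The paper does not actually prove this lemma; it is quoted from \cite{meyniel3, mohar1} as a known result and used as a black box, so there is no in-paper argument to compare against. Your induction on $|V(G)|$ with the ``lift one Kempe change at a time'' mechanism is the standard proof, and you handle the only subtle point (when $v$ has colour in $\{a,b\}$ and is attached to the component $D$ being flipped) cleanly: either a spare colour lets you push $v$ out of $G(a,b)$ by a singleton change, or the pigeonhole forces $\deg(v)=d=k-1$ with all neighbour colours distinct, whence $D\cup\{v\}$ is the full $(a,b)$-component in $G$ and a single swap does the job. One tiny remark: properness after the final swap is automatic since Kempe changes preserve proper colourings, so that sentence is redundant, but it does no harm.
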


\begin{lemma}[\cite{meyniel3}]\label{ldeg}
Let $k\geq 1$ be an integer. Let $G_1, G_2$ be two graphs such that $G_1 \cap G_2$ is complete. If both $C_k(G_1)$ and $C_k(G_2)$ are Kempe classes then $C_k(G_1 \cup G_2)$ is a Kempe class.
\end{lemma}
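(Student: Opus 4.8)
The plan is to fix an arbitrary pair of colourings $\alpha,\beta\in C_k(G_1\cup G_2)$ and transform one into the other. Write $G=G_1\cup G_2$ and $S=V(G_1)\cap V(G_2)$, so that $G[S]=G_1\cap G_2$ is a clique; note that $G_1$ and $G_2$ are then induced subgraphs of $G$ and that every path from $V(G_1)\setminus S$ to $V(G_2)\setminus S$ meets $S$. The observation I would build everything on is that, since $S$ is a clique, its vertices receive distinct colours in any colouring, so for any two colours $a,b$ at most one vertex of $S$ is coloured $a$ and at most one is coloured $b$; moreover, if $S$ contains a vertex coloured $a$ or $b$ then all such vertices are pairwise adjacent and hence lie in a single $(a,b)$-component. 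From this I would deduce the key lifting property: if $K$ is an $(a,b)$-component of $G_1$, then the $(a,b)$-component of $G$ containing $K$ meets $V(G_1)$ in exactly $K$ (any vertex of $V(G_1)\setminus S$ reached through $G_2$ must enter via an $S$-vertex coloured $a$ or $b$, which already lies in $K$). Hence performing this Kempe change in $G$ reproduces it verbatim on $V(G_1)$ and can only \emph{additionally} recolour vertices of $V(G_2)\setminus S$. The symmetric statement holds with $G_1,G_2$ exchanged, and in particular a Kempe change whose chain avoids $S$ lifts to a Kempe change of $G$ confined to a single side.

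With this in hand I would proceed in two phases. In the first phase I use the hypothesis that $C_k(G_1)$ is a Kempe class: there is a sequence of Kempe changes of $G_1$ taking $\alpha|_{G_1}$ to $\beta|_{G_1}$, and lifting each of them to $G$ as above yields a colouring $\alpha'$ with $\alpha'|_{G_1}=\beta|_{G_1}$, the only side effect being a recolouring of $V(G_2)\setminus S$. In particular $\alpha'$ and $\beta$ now agree on all of $G_1$, and therefore on $S$. In the second phase I must transform $\alpha'$ into $\beta$ by altering only $V(G_2)\setminus S$ while keeping $V(G_1)$, and so $S$, frozen. By the lifting property, a Kempe change of $G_2$ whose chain avoids $S$ lifts to a change of $G$ that touches nothing in $V(G_1)$. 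Thus it suffices to connect $\alpha'|_{G_2}$ and $\beta|_{G_2}$, which agree on $S$, by a sequence of Kempe changes of $G_2$ all of whose chains avoid $S$.

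The crux, and what I expect to be the main obstacle, is precisely this last claim: that in a graph whose $k$-colourings form a single Kempe class, any two colourings that agree on a clique $S$ can be connected by Kempe changes that never meet $S$. Note this is genuinely needed, since moves confined to colour pairs absent from $S$ cannot recolour a vertex currently coloured like some vertex of $S$. I would attack it by starting from an arbitrary connecting sequence of Kempe changes of $G_2$ (which exists because $C_k(G_2)$ is a Kempe class) and rerouting the moves that meet $S$. The clique structure is what makes this tractable: a move meets $S$ only through the unique $(a,b)$-component containing the at most two vertices of $S$ coloured $a$ or $b$, so the offending components are tightly constrained, and because $S$ carries the same colouring at the start and end of the sequence these excursions must ultimately cancel. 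Showing they can be eliminated in favour of $S$-avoiding moves is the technical heart of the argument. Once this fibre-connectivity statement is established, the two phases combine to give $\alpha\sim_k\beta$, and since $\alpha,\beta$ were arbitrary, $C_k(G_1\cup G_2)$ is a Kempe class.
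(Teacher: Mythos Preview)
The paper does not supply a proof of this lemma; it is quoted from Las~Vergnas and Meyniel and used as a black box. So there is no in-paper argument to compare against, and I assess your proposal on its own merits.

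Your lifting property is correct and Phase~1 works as written. The gap is exactly where you flag it: the ``fibre-connectivity'' statement that two $k$-colourings of $G_2$ agreeing on the clique $S$ can be joined by Kempe changes whose chains avoid $S$. You state this as the technical heart and then offer only a plan (``start from an arbitrary connecting sequence and reroute the moves that meet $S$, since these excursions must ultimately cancel''). That is not a proof, and nothing earlier in your outline supplies one; without it the two phases do not close up, so as it stands the proposal is incomplete. The claim is true, and the clean way to establish it is not to reroute but to \emph{replace each $S$-meeting move by its complement}. Walk through a connecting sequence $c_0,\dots,c_m$ in $C_k(G_2)$; whenever a step swaps the $(a,b)$-chain $L$ meeting $S$, note that swapping $L$ is the same as applying the colour transposition $(a\,b)$ globally and then swapping every $(a,b)$-chain \emph{other} than $L$, all of which avoid $S$. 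Carry the accumulated permutation $\pi_i$ and pass to $c_i':=\pi_i^{-1}\circ c_i$; one checks that each $c_i'\to c_{i+1}'$ is realised by $S$-avoiding changes and that every $c_i'$ agrees with $\gamma$ on $S$. At the end $c_m'=\pi_m^{-1}\circ\delta$ with $\pi_m$ fixing every colour used on $S$, so $\pi_m$ is a product of transpositions of colours absent from $S$, each realised by swapping all chains of that colour pair---again $S$-avoiding. This is the missing argument your outline needs.
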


\begin{lemma}[\cite{mohar1}]\label{ledge}
Let $k\geq 1$ be an integer and let $G$ be a subgraph of a graph $G'$. Let~$c_1$ and $c_2$ be the restrictions, to $G$, of two $k$-colourings $c_1'$ and $c_2'$ of $G'$. If $c_1'$ and $c_2'$ are Kempe equivalent then $c_1$ and $c_2$ are Kempe equivalent.
\end{lemma}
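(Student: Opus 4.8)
The plan is to reduce the statement to the effect of a single Kempe change and then chain the resulting transformations together. First I would unpack the hypothesis: Kempe equivalence of $c_1'$ and $c_2'$ gives a sequence $c_1' = d_0', d_1', \dots, d_m' = c_2'$ of $k$-colourings of $G'$ in which each $d_{i+1}'$ arises from $d_i'$ by interchanging the colours on a single $(a,b)$-component of $G'$. Writing $d_i$ for the restriction of $d_i'$ to $G$ (each $d_i$ is a proper colouring because every edge of $G$ is an edge of $G'$), it then suffices to prove $d_i \sim_k d_{i+1}$ for each $i$; transitivity of $\sim_k$ delivers $c_1 = d_0 \sim_k d_m = c_2$.

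So I would fix one step and let $K'$ be the $(a,b)$-component of $G'$ whose colours are swapped, taken with respect to $d_i'$. The single observation driving the whole argument is that, since $G$ is a subgraph of $G'$, any path in $G(a,b)$ is also a path in $G'(a,b)$; hence any two vertices of $V(G)$ lying in a common $(a,b)$-component of $G$ under $d_i$ also lie in a common $(a,b)$-component of $G'$ under $d_i'$. I would use this to conclude that every $(a,b)$-component of $G$ is either entirely contained in $K'$ or entirely disjoint from it: if one vertex of a component $C$ of $G(a,b)$ lands in $K'$, then all of $C$ lies in the same $(a,b)$-component of $G'$, namely $K'$. Consequently $K' \cap V(G)$ is exactly a union $K_1 \cup \cdots \cup K_t$ of $(a,b)$-components of $G$.

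With this in hand the correspondence is immediate. Interchanging the colours of $K'$ in $G'$ affects, among the vertices of $G$, precisely those in $K' \cap V(G)$, so $d_{i+1}$ is obtained from $d_i$ by swapping $a$ and $b$ on $K_1 \cup \cdots \cup K_t$ and fixing all other vertices. I would realise this downstairs by the sequence of $t$ Kempe changes in $G$ that interchanges the colours of $K_1$, then $K_2$, and so on: the $K_j$ are vertex-disjoint $(a,b)$-components, so each swap leaves the others intact, and after all $t$ swaps every vertex of $K' \cap V(G)$ has been recoloured while everything else is unchanged. Thus $d_i \sim_k d_{i+1}$ (with the degenerate case $t=0$, where $K'\cap V(G)=\emptyset$ and $d_i = d_{i+1}$, handled trivially).

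I do not anticipate a serious obstacle; the one point needing care is the claim that the connected chain $K'$ may meet $G$ in several distinct $(a,b)$-components but can never merge an \emph{outside} component of $G$ into itself. This is exactly the monotonicity of connectivity under passing to a subgraph, and it is precisely what makes the bookkeeping ``one Kempe change in $G'$ $\mapsto$ a finite batch of Kempe changes in $G$'' well defined.
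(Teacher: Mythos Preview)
Your argument is correct and is the standard proof of this fact: reduce to a single Kempe change in $G'$, observe that each $(a,b)$-component of $G$ sits inside a single $(a,b)$-component of $G'$ (since $G$ is a subgraph), hence the chain $K'$ meets $V(G)$ in a disjoint union of $(a,b)$-components of $G$, and swapping these one by one in $G$ reproduces the restriction of the swap in $G'$. The only minor point one might spell out is that after swapping $K_1$ the induced subgraph $G(a,b)$ is literally unchanged (the vertices of $K_1$ still carry colours $a$ and $b$), so $K_2,\dots,K_t$ remain $(a,b)$-components; you essentially say this, and it is fine.

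As for comparison with the paper: there is nothing to compare against. The paper does not prove Lemma~\ref{ledge}; it quotes it from Mohar~\cite{mohar1} without proof, alongside the other two auxiliary lemmas taken from the literature. Your write-up therefore supplies an argument the paper omits, and it matches the standard one.
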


For convenience we restate Lemma~\ref{l1} before we present its proof.

\medskip
\noindent
{\bf Lemma~\ref{l1} (restated).}
{\it If $G$ is a cubic graph that is not $3$-connected then $C_3(G)$ is a Kempe~class.}

\medskip
\noindent {\it Proof.}
As disconnected graphs can be considered component-wise, we assume that $G$ is connected. As $G$ is cubic, $G$ has at least four vertices. Because $G$ is not 3-connected, $G$ has a separator $S$ of size at most~2. Let $S$ be a minimum separator of $G$ such that $G =G_1 \cup G_2$ and $G_1 \cap G_2=S$.  As every vertex in $S$ has degree at most~2 in each $G_i$ and $G$ is cubic, $G_1$ and $G_2$ are $2$-degenerate. Hence, by Lemma~\ref{degprop1}, $C_3(G_1)$ and $C_3(G_2)$ are Kempe classes.  If $S$ is a clique, we apply Lemma~\ref{ldeg}.  Thus we assume that $S$, and any other minimum separator of $G$, is not a clique. Then $S=\{x, y\}$ for two distinct vertices $x$ and $y$ with $xy \not\in E(G)$.
   
Because $S$ is a minimum separator, $x$ and $y$ are non-adjacent and $G$ is cubic, $x$ has either one neighbour in $G_1$ and two in $G_2$, or the other way around; the same holds for vertex~$y$. For $i=1,2$, let $N_i(x)$ and $N_i(y)$ be the set of neighbours of $x$ and~$y$, respectively, in $G_i$.  Then we have that either $|N_1(x)|=1$ and $|N_2(x)|=2$, or $|N_1(x)|=2$ and $|N_2(x)|=1$, and similarly, that either $|N_1(y)|=1$ and $|N_2(y)|=2$, or $|N_1(y)|=2$ and $|N_2(y)|=1$.  Let $x_1\in N_1(x)$ for some $x_1\in V(G_1)$.

We may assume that $|N_1(x)|\neq |N_1(y)|$; if not we can do as follows.  Assume without loss of generality that $N_1(x)=\{x_1\}$ and that $|N_1(y)|=1$. Then $\{x_1,y\}$ is a separator. By our assumption that $G$ has no minimum separator that is a clique, we find that $\{x_1,y\}$ is a minimum separator with $x_1y\notin E(G)$. As $G$ is cubic, $x_1$ has two neighbours in $V(G_1)\setminus \{x,x_1\}$. As $|N_1(y)|=1$ and $x_1$ and $y$ are not adjacent,~$y$ has exactly one neighbour in $V(G_1)\setminus \{x,x_1\}$.  Hence we could take $\{x_1,y\}$ as our minimum separator instead of~$S$ in order to get the desired property.  We may thus assume that $|N_1(x)|\not=|N_1(y)|$. As this means that $|N_2(x)|\not=|N_2(y)|$, we can let $N_1(x) = \{x_1\}$ and $N_2(y)=\{y_1\}$ for some $y_1\in V(G_2)$.

It now suffices to prove the following two claims.

\medskip
\noindent
{\it Claim 1. All colourings $\alpha$ such that $\alpha(x) \not=\alpha(y)$ are Kempe equivalent in $C_3(G)$. }
  
\medskip
\noindent
We prove Claim~1 as follows. We add an edge $e$ between $x$ and $y$. This results in graphs $G_1+e$, $G_2+e$ and $G+e$. We first prove that $C_3(G+e)$ is a Kempe class. Because $x$ and $y$ have degree~1 in $G_1$ and $G_2$, respectively, and $G$ is cubic, we find that the graphs $G_1 +e$ and $G_2 +e$ are $2$-degenerate. Hence, by Lemma~\ref{degprop1},  $C_3(G_1+e)$ and $C_3(G_2+e)$ are Kempe classes. By Lemma~\ref{ldeg}, it holds that $C_3(G+e)$ is a Kempe class. Applying Lemma~\ref{ledge} completes the proof of Claim~1.

\medskip
\noindent
{\it Claim 2. For every colouring $\alpha$ such that $\alpha(x) = \alpha(y)$, there exists a colouring $\beta$ with $\beta(x) \not= \beta(y)$ such that $\alpha$ and $\beta$ are Kempe equivalent in $C_3(G)$.}

\medskip
\noindent
We assume without loss of generality that $\alpha(x) = \alpha(y) = 1$ and $\alpha(y_1) = 2$.  If $\alpha(x_1)=2$ then we apply a Kempe change on the $(1,3)$-component of $G$ that contains $x$. Note that $y$ does not belong to this component. Hence afterwards we obtain the desired colouring $\gamma$. If $\alpha(x_1)=3$ then we first apply a Kempe change on the $(2,3)$-component of $G$ that contains $x_1$. Note that this does not affect the colours of $x$, $y$ and~$y_1$ as they do not belong to this component.  Afterwards we proceed as before.  This completes the proof of Claim~2 (and the lemma).\qed

\subsection{Proof of Lemma~\ref{l2}}\label{s-l2}
  
We require some further terminology and three lemmas.  We {\it identify} two vertices $x$ and~$y$ in a graph~$G$ if we replace them by a new vertex adjacent to all neighbours of $x$ and $y$ in~$G$. Two colourings $\alpha$ and $\beta$ of a graph $G$ \emph{match} if there exists two vertices $x, y$ with a common neighbour in $G$ such that $\alpha(x) = \alpha(y)$ and $\beta(x) = \beta(y)$. 

\begin{lemma}\label{lplanar}
Let $k\geq 1$ and $G'$ be the graph obtained from a graph $G$ by identifying two non-adjacent vertices $x$ and $y$. If $C_k(G')$ is a Kempe class then all $k$-colourings $c$ of $G$ with $c(x) = c(y)$ are Kempe equivalent. 
\end{lemma}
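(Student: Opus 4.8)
The plan is to exhibit a direct correspondence between $k$-colourings of $G'$ and those $k$-colourings of $G$ that assign the same colour to $x$ and $y$, and to show this correspondence is compatible with Kempe changes. First I would set up the bijection: given a $k$-colouring $c$ of $G$ with $c(x)=c(y)$, define a colouring $c'$ of $G'$ by giving the identified vertex $v$ the common colour $c(x)=c(y)$ and leaving all other vertices unchanged. Since $x$ and $y$ are non-adjacent in $G$, this is well-defined and proper: the neighbours of $v$ in $G'$ are exactly $N_G(x)\cup N_G(y)$, none of which has colour $c(x)$. Conversely, any $k$-colouring $c'$ of $G'$ pulls back to a colouring $c$ of $G$ with $c(x)=c(y)=c'(v)$. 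So the two sets are in bijection, and I would denote this bijection explicitly.

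Next I would verify that a single Kempe change in $G'$ corresponds to a single Kempe change in $G$. Fix colours $a,b$ and look at an $(a,b)$-component $K'$ of $G'$. If $v\notin K'$, then $K'$ is also an $(a,b)$-component of $G$ (the vertex set is unchanged, edges among those vertices are the same, and since the colour classes outside $v$ are identical in $c$ and $c'$, maximality is preserved because $v$ has colour $c(x)\notin\{a,b\}$ unless $c(x)\in\{a,b\}$ — one must be slightly careful here, see below). If $v\in K'$, then $K'\setminus\{v\}$ together with $x$ and $y$ forms the vertex set of the $(a,b)$-component(s) of $G$ containing $x$ and $y$; the key point is that $x$ and $y$ lie in the \emph{same} $(a,b)$-component of $G$, because any path through $v$ in $K'$ corresponds to a path in $G$ that may enter via a neighbour of $x$ and leave via a neighbour of $y$, but since $c(x)=c(y)$ both $x$ and $y$ carry colour $v$'s colour and are reachable. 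Performing the Kempe change on this single component of $G$ recolours $x$ and $y$ simultaneously, keeping $c(x)=c(y)$, and the resulting colouring is exactly the pullback of the colouring obtained from $c'$ by flipping $K'$. Thus every Kempe change in $G'$ lifts to a Kempe change in $G$ between colourings in our distinguished set.

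Putting these together: since $C_k(G')$ is a Kempe class, any two $k$-colourings $c_1',c_2'$ of $G'$ are connected by a sequence of Kempe changes; lifting this sequence step by step via the bijection gives a sequence of Kempe changes in $G$ connecting the pullbacks $c_1$ and $c_2$, all of which have $c(x)=c(y)$. Hence all $k$-colourings $c$ of $G$ with $c(x)=c(y)$ are Kempe equivalent, as required.

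I expect the main obstacle to be the careful case analysis in the step where $v$'s colour happens to be $a$ or $b$: when the common colour $c(x)=c(y)$ equals one of the two colours defining the Kempe chain, the $(a,b)$-subgraph of $G$ contains $x$ and $y$ as well as their neighbours, and one has to argue that flipping the component through $v$ in $G'$ really does correspond to flipping a single component in $G$ rather than two separate ones — i.e. that $x$ and $y$ are not in distinct $(a,b)$-components of $G$ that only got merged by identification. The resolution is precisely that identification does not merge distinct components here in a way that breaks the correspondence: if $x$ and $y$ were in different $(a,b)$-components of $G$, then in $G'$ the vertex $v$ would still be in a single component, and flipping that component of $G'$ corresponds to flipping \emph{both} components of $G$; but flipping two $(a,b)$-components that share no edge is itself realisable as two successive Kempe changes in $G$, so the lift still exists — it may just take two steps instead of one. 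Making this bookkeeping precise, while keeping the invariant $c(x)=c(y)$ throughout, is the one place requiring genuine attention.
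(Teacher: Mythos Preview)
Your approach is essentially the same as the paper's: push colourings with $c(x)=c(y)$ down to $G'$, use that $C_k(G')$ is a Kempe class, and lift the resulting sequence of Kempe changes back to $G$. One correction: your claim in the main body that $x$ and $y$ must lie in the \emph{same} $(a,b)$-component of $G$ is false in general, and your own final paragraph gives the right fix---a Kempe change through the identified vertex in $G'$ may lift to \emph{two} Kempe changes in $G$ (one on the component containing $x$, one on that containing $y$), which still preserves $c(x)=c(y)$; this is exactly how the paper handles it, so promote that observation into the main argument and drop the incorrect claim.
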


\begin{proof}
Let $\alpha$ and $\beta$ be two $k$-colourings of $G$ with $\alpha(x)=\alpha(y)$ and $\beta(x)=\beta(y)$. Let~$z$ be the vertex of $G'$ that is obtained after identifying $x$ and $y$.  Let $\alpha'$ and $\beta'$ be the $k$-colourings of $G'$ that agree with $\alpha$ and $\beta$, respectively, on $V(G)\setminus \{x,y\}$ and for which $\alpha'(z)=\alpha(x)(=\alpha(y))$  and $\beta'(z)=\beta(x)(=\beta(y))$. By our assumption, there exists a Kempe chain from $\alpha'$ to $\beta'$ in $G'$. We mimic this Kempe chain in $G$. Note that any $(a,b)$-component in $G'$ that contains $z$ corresponds to at most two $(a,b)$-components in~$G$, as $x$ and $y$ may get separated. Hence, every Kempe change on an $(a,b)$-component corresponds to either one or two Kempe changes in $G$ (if $x$ and $y$ are in different $(a,b)$-components  then we apply the corresponding Kempe change in $G'$ on each of these two components). In this way we obtain a Kempe chain from $\alpha$ to $\beta$ as required. \qed
\end{proof}

\begin{lemma}\label{lidentify1}
Let $k\geq 3$. If $\alpha$ and $\beta$ are matching $k$-colourings of a 3-connected graph $G$ of maximum degree~$k$ then $\alpha \sim_k \beta$. 
\end{lemma}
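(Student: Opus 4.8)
The plan is to reduce the statement, via Lemma~\ref{lplanar}, to a degeneracy property of an auxiliary graph. Since $\alpha$ and $\beta$ match, there are vertices $x,y$ with a common neighbour $w$ such that $\alpha(x)=\alpha(y)$ and $\beta(x)=\beta(y)$; note that $x$ and $y$ are non-adjacent, as they receive equal colours in a proper colouring. Let $G'$ be obtained from $G$ by identifying $x$ and $y$ into a new vertex $z$. As $\alpha$ and $\beta$ are both $k$-colourings of $G$ assigning $x$ and $y$ the same colour, Lemma~\ref{lplanar} tells us that it suffices to prove that $C_k(G')$ is a Kempe class. Since $k\ge(k-1)+1$, Lemma~\ref{degprop1} reduces this in turn to showing that $G'$ is $(k-1)$-degenerate, and it is here that the hypotheses that $G$ is $3$-connected, has maximum degree $k$, and that $x,y$ have a common neighbour will all be used.

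To prove that $G'$ is $(k-1)$-degenerate I would argue by contradiction: suppose some induced subgraph $H$ of $G'$ has minimum degree at least $k$. First suppose $z\notin V(H)$. Then $H$ is an induced subgraph of $G-\{x,y\}$, which has maximum degree at most $k$; hence every vertex of $H$ has degree exactly $k$ in $G-\{x,y\}$ with all of its neighbours in $V(H)$, so $V(H)$ is a union of $k$-regular components of $G-\{x,y\}$. Such a component has no edge to $x$ or $y$ and is therefore a component of $G$, contradicting the connectedness of $G$. So $z\in V(H)$. Next, no common neighbour $v$ of $x$ and $y$ can lie in $V(H)$, because identifying $x$ and $y$ merges the two edges $vx,vy$ into one, leaving $\deg_{G'}(v)\le k-1$. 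Therefore every $v\in V(H)\setminus\{z\}$ is adjacent in $G$ to at most one of $x,y$, so that $\deg_{G'}(v)=\deg_G(v)\le k$; combined with $\deg_H(v)\ge k$ this forces $\deg_G(v)=k$ and $N_G(v)\subseteq S$, where $S:=(V(H)\setminus\{z\})\cup\{x,y\}$. Now $S\setminus\{x,y\}=V(H)\setminus\{z\}$ is non-empty (indeed $|V(H)|\ge k+1$), every one of its vertices has all its $G$-neighbours inside $S$, and $w\notin S$ so that $V(G)\setminus S\neq\emptyset$; hence $\{x,y\}$ separates $S\setminus\{x,y\}$ from $V(G)\setminus S$ in $G$, contradicting $3$-connectivity. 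This establishes that $G'$ is $(k-1)$-degenerate, and the lemma follows.

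I expect the main obstacle to be the second case above, namely ruling out a minimum-degree-$k$ subgraph of $G'$ that contains $z$. The guiding idea is that identifying two vertices with a common neighbour decreases degrees only at their common neighbours, so apart from $z$ every vertex of $G'$ keeps its $G$-degree; a dense region through $z$ would then pull back to a vertex set of $G$ that is ``closed'' except at the two vertices $x$ and $y$, which is impossible in a $3$-connected graph, while the common neighbour $w$ is precisely what guarantees that this set is not all of $V(G)$. Turning this picture into the precise case analysis --- according to whether a given vertex of $H$ is adjacent in $G$ to $x$, to $y$, to both, or to neither --- is the one step that requires genuine care.
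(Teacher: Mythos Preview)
Your proposal is correct and follows the same overall reduction as the paper: identify the two matching vertices, then show the resulting graph $G'$ is $(k-1)$-degenerate, and conclude via Lemmas~\ref{degprop1} and~\ref{lplanar}. The difference lies in how degeneracy of $G'$ is established.

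The paper first disposes of the case where $G$ itself is $(k-1)$-degenerate, then assumes $G$ is $k$-regular. It next proves that $G'$ is $2$-connected, and uses this to build an explicit elimination order: starting from the common neighbour $w$ (which has degree $k-1$ in $G'$), it peels off vertices along paths from $w$ to the neighbours of the identified vertex, deletes that vertex once it is isolated, and observes that each remaining component retains a low-degree vertex.

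Your argument is instead a clean proof by contradiction: any induced subgraph $H$ of $G'$ with minimum degree at least $k$ either avoids the identified vertex $z$ (forcing a proper $k$-regular component of $G$, impossible by connectedness) or contains $z$ but excludes every common neighbour of $x$ and $y$; the preimage $S=(V(H)\setminus\{z\})\cup\{x,y\}$ is then neighbourhood-closed in $G$ while missing $w$, so $\{x,y\}$ is a $2$-separator, contradicting $3$-connectivity. This route is somewhat slicker: it needs neither the regular/non-regular case split nor the auxiliary $2$-connectivity step, and it isolates exactly where each hypothesis (maximum degree $k$, the common neighbour, $3$-connectivity) is used. The paper's version, on the other hand, is more constructive in that it actually produces a degeneracy ordering.
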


\begin{proof}
If $G$ is $(k-1)$-degenerate then $\alpha \sim_k \beta$ by Lemma~\ref{degprop1}. Assume that $G$ is not $(k-1)$-degenerate. Then $G$ is $k$-regular. Since $\alpha$ and $\beta$ match, there exist two vertices~$u$ and $v$ of $G$ that have a common neighbour $w$ such that $\alpha(u) = \alpha(v)$ and $\beta(u) = \beta(v)$.  Let $x$ denote the vertex of $G'$ obtained by identifying $u$ and $v$. 

Let $S$ be a separator of $G'$. If $S$ does not contain $x$ then $S$ is a separator of $G$. Then $|S|\geq 3$ as $G$ is 3-connected. If $S$ contains $x$ then $S$ must contain another vertex as well; otherwise $\{u,v\}$ is a separator of size~2 of~$G$, which is not possible. Hence, $|S|\geq 2$ in this case. We conclude that $G'$ is 2-connected.

We now prove that $G'$ is $(k-1)$-degenerate. Note that, in $G'$, $w$ has degree $k-1$, $x$ has degree at least~$k$ and all other vertices have degree $k$.   Let $u_1, \dots, u_r$ for some $r\geq k-1$ be the neighbours of $x$ not equal to $w$.  Since $G'$ is $2$-connected, the graph $G'' = G' \backslash x$ is connected. This means that every $u_i$ is connected to $w$ via a path in $G''$, which corresponds to a path in $G'$ that does not contain~$x$. Since $w$ has degree $k-1$ and every vertex not equal to $x$ has degree~$k$, we successively delete vertices of these paths starting from $w$ towards $u_i$ so that each time we delete a vertex of degree at most $k-1$. Afterwards we can delete $x$ as $x$ has degree~0. The remaining vertices form an induced subgraph of $G'$ whose components each have maximum degree at least~$k$ and at least one vertex of degree at most~$k-1$. Hence, we can continue deleting vertices of degree at most~$k-1$ and thus find that $G'$ is $(k-1)$-degenerate.  Then, by  Lemma~\ref{degprop1}, $C_k(G')$ is a Kempe class. Hence, by Lemma~\ref{lplanar}, we find that  $\alpha \sim_k \beta$ as required. This completes the proof. \qed
\end{proof}

\begin{lemma}\label{l-net}
Every 3-connected cubic claw-free graph $G$ that is neither $K_4$ nor the 3-prism is house-free, diamond-free and contains an induced net (see also Figure~\ref{fig:1}).
\end{lemma}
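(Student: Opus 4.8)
The plan is to exploit the triangle structure of $3$-connected claw-free cubic graphs. Throughout I would use that a $3$-connected graph is $3$-edge-connected, so that every nonempty proper vertex set $B$ of $G$ sends at least three edges to its complement; call this fact $(\star)$. I would first settle diamond-freeness. In an induced diamond the two vertices not incident with the missing edge use all three of their edges inside the diamond and so have no further neighbour in $G$, while the other two, say $c$ and $d$, each have exactly one further neighbour, say $p$ and $q$ respectively. If $p=q$ then the diamond together with $p$ forms a proper vertex set sending only one edge (the third edge at $p$) outside, contradicting $(\star)$; and if $p\neq q$ then the four vertices of the diamond form a proper vertex set sending only two edges ($cp$ and $dq$) outside, again contradicting $(\star)$. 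Hence $G$ is diamond-free.

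Next I would establish the decomposition into triangles. By claw-freeness two of the three neighbours of any vertex are adjacent, so every vertex lies in a triangle; and if a vertex lay in two triangles these would share one of its neighbours, and since $G$ is diamond-free the four vertices involved would then induce $K_{4}$, forcing $G=K_{4}$, which is excluded. So the triangles of $G$ partition $V(G)$, and every triangle of $G$ is one of these blocks. Contracting each block to a single vertex yields a $3$-regular multigraph $H$, since each block has exactly three edges leaving it, one at each of its vertices. The key step, which I expect to be the main obstacle, is to show that $H$ is simple. It has no loops, since the edge leaving a vertex of a block goes to another block. For multi-edges, suppose blocks $T_{1},T_{2}$ are joined by at least two edges of $G$. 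If they are joined by three such edges then $T_{1}\cup T_{2}$ is a whole component of $G$, so $G$ is the $3$-prism, which is excluded. If they are joined by exactly two edges, I would follow the third edge leaving $T_{1}$ and the third edge leaving $T_{2}$: if these reach distinct blocks then $T_{1}\cup T_{2}$ is a proper vertex set with only two boundary edges, contradicting $(\star)$; and if they reach the same block $T_{3}$ then they meet $T_{3}$ in two distinct vertices (else some vertex of $T_{3}$ would have degree four) and $T_{1}\cup T_{2}\cup T_{3}$ is a proper vertex set with only one boundary edge, again contradicting $(\star)$. This case analysis is short, but it is the technical heart of the lemma, and the remaining two assertions follow from it quickly.

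Finally I would read off house-freeness and the existence of an induced net. In an induced house the three pairwise adjacent vertices $x,y,z$ span a block $T$; the two ``base'' vertices $w$ and $s$ each have two neighbours in the house and exactly one outside, and since no vertex lies in two blocks, checking which pairs of house-vertices are adjacent forces the (unique) block containing $w$ to equal the one containing $s$, say $T'$, with $T'\neq T$. But then the two house-edges joining $\{y,z\}\subseteq T$ to $\{w,s\}\subseteq T'$ form a multi-edge of $H$ between $T$ and $T'$, contradicting that $H$ is simple; hence $G$ is house-free. For the induced net, take any block $T=\{x,y,z\}$ and let $x',y',z'$ be the neighbours of $x,y,z$ outside $T$. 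Because $H$ is simple, the edges $xx'$, $yy'$, $zz'$ reach three distinct blocks, all different from $T$; consequently $x',y',z'$ are pairwise distinct and pairwise non-adjacent, and none of them is adjacent to any vertex of $T$ other than the one it is attached to. Thus the only edges inside $\{x,y,z,x',y',z'\}$ are $xy$, $yz$, $zx$ together with $xx'$, $yy'$, $zz'$, so this set induces a net (cf.\ Figure~\ref{fig:1}), which completes the proof.
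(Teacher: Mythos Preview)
Your proof is correct but follows a genuinely different route from the paper. The paper argues directly with vertex separators: for the diamond it observes that the two non-adjacent vertices form a $2$-separator; for the house it finds the extra common neighbour $t$ of $s$ and $w$ and shows that either $G$ is the $3$-prism or $\{t,x\}$ is a $2$-separator; then the net is read off from diamond- and house-freeness. You instead pass through the structural picture: diamond-freeness gives a partition of $V(G)$ into triangles, you show the contracted cubic multigraph $H$ is simple (this is where the $3$-prism exclusion is used), and then both house-freeness and the induced net drop out of the simplicity of $H$.

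What your approach buys is a clean global description of $G$ (it is the ``triangle inflation'' of a simple cubic graph $H$), which makes the two conclusions essentially immediate and would be reusable elsewhere. The paper's approach is shorter and entirely local, never needing the decomposition. One small remark: in your multi-edge analysis, the subcase where the two third edges reach the same block $T_3$ is unnecessary, since $T_1\cup T_2$ is already a proper vertex set with only two boundary edges regardless of where those edges land, so $(\star)$ applies directly; your extra step is harmless but can be dropped.
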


\begin{proof}
First suppose that $G$ contains an induced diamond $D$. Then, since $G$ is cubic, the two non-adjacent vertices in $D$ form a separator and $G$ is not 3-connected, a contradiction. Consequently, $G$ is diamond-free.

Now suppose that $G$ contains an induced house $H$. We use the vertex labels of Figure~\ref{fig:1}.  So, $s,w,x$ are the vertices that have degree~2 in $H$, and $s$ and $w$ are adjacent. As $G$ is cubic, $w$ has a neighbour $t\in V(G)\setminus V(H)$. Since $G$ is cubic and claw-free,~$t$ must be adjacent to $s$.  If $tx \in E$ then $G$ is the $3$-prism. If $tx \notin E$ then $t$ and $x$ form a separator of size~2. In either case we have a contradiction. Consequently, $G$ is house-free.

We now prove that $G$ has an induced net. As $G$ is cubic and claw-free, it has a triangle  and each vertex of the triangle has one neighbour in $G$ outside the triangle.  Because $G$ is not $K_4$ and diamond-free, these neighbours are distinct. Then, because~$G$ is house-free, no two of them are adjacent. Hence, together with the vertices of the triangle, they induce a net.\qed
\end{proof}

We restate Lemma~\ref{l2} before we present its proof.

\medskip
\noindent
{\bf Lemma~\ref{l2} (restated).}
{\it If $G$ is a $3$-connected cubic graph that is claw-free but that is neither $K_4$ nor the $3$-prism then $C_3(G)$ is a Kempe class.}

\medskip
\noindent
{\it Proof.} By Lemma~\ref{l-net}, $G$ contains an induced net~$N$. For the vertices of $N$ we use the labels of Figure~\ref{fig:1}. In particular, we refer to $x$, $y$ and $z$ as the \emph{t-vertices} of $N$, and $x'$, $y'$ and $z'$ as the  \emph{p-vertices}.  Let $\alpha$ and~$\beta$ be two $3$-colourings of~$G$. In order to show that $\alpha \sim_3 \beta$ we distinguish two cases. 
 
\medskip
\noindent
{\bf Case 1.} There are two p-vertices with identical colours under $\alpha$ or $\beta$.\\
Assume that $\alpha(x') = \alpha(y') = 1$. Then $\alpha(z) = 1$ as the t-vertices form a triangle, so colour 1 must be used on one of them.  Assume without loss of generality that $\alpha(z') = \alpha(x)=2$ and so $\alpha(y)=3$.  If  $\beta(z') = \beta(x)$ then $\alpha$  and $\beta$ match (as $x$ and $z'$ have $z$ as a common neighbour).  Then, by Lemma~\ref{lidentify1}, we find that $\alpha \sim_3 \beta$.  Otherwise $\beta(z') = \beta(y)$, since the colour of $z'$ must appear on one of $x$ and $y$.  Note that the $(2, 3)$-component containing $x$ under $\alpha$ consists only of $x$ and $y$. Then a Kempe exchange applied to this component yields a colouring $\alpha'$ such that $\alpha'(z') = \alpha'(y)$. As $y$ and $z'$ have $z$ as a common neighbour as well, this means that $\alpha'$ and $\beta$ match. Hence, it holds that $\alpha\sim_3 \alpha' \sim_3 \beta$, where the second equivalence follows from Lemma~\ref{lidentify1}.

 \medskip
 \noindent
 {\bf Case 2.} All three p-vertices have distinct colours under both $\alpha$ and $\beta$.\\
Assume without loss of generality that $\alpha(x) = \alpha(z') = 1$, $\alpha(y) = \alpha(x') = 2$, and $\alpha(z) = \alpha(y') = 3$.  Note that Kempe chains of $G$ are paths or cycles, as no vertex in a chain can have degree~3 since all its neighbours in a chain are coloured alike and $G$ is claw-free.  So, we will refer to \emph{$(a,b)$-paths} rather than $(a,b)$-components.

We will now prove that there exists a colouring $\alpha'$ with $\alpha \sim_3 \alpha'$ that assigns the same colour to two  p-vertices of~$N$. This suffices to complete the proof of the lemma, as afterwards we can apply Case~1.

Consider the $(1,2)$-path $P$ that contains $x'$.  If $P$ does not contain $z'$, then a Kempe exchange on $P$ gives us  a desired colouring $\alpha'$ (with $x'$ and $z'$ coloured alike). So we can assume that $x'$ and $z'$ are joined by a $(1,2)$-path $P_{12}$, and, similarly, $x'$ and $y'$ by a $(2,3)$-path $P_{23}$, and $y'$ and $z'$  by a $(1,3)$-path $P_{13}$. 

Let $G'$ be the subgraph of $G$ induced by the three paths. Note that $P_{12}$ has end-vertices $y$ and $z'$, $P_{23}$ has end-vertices $z$ and $x'$ and $P_{13}$ has end-vertices $x$ and $y'$. Hence, $G'$ contains the vertices of $N$ and every vertex in $G'-N$ is an internal vertex of one of the three paths. As $G$ is cubic, this means that each vertex in $G'-N$ belongs to exactly one path. Moreover, as $G$ is claw-free and cubic, two vertices in $G'-N$ that are on different paths are adjacent if and only if they have a $p$-vertex as a common neighbour.
 
In Figure~\ref{fig:4} are illustrations of $G'$ and the colourings of this proof that we are about to discuss. Let $x'' \neq x$ be the vertex in $P_{12}$ adjacent to $x'$. From the above it follows that~$x''$ is adjacent to the neighbour of $x'$ on $P_{23}$ and that no other vertex of $P_{12}$ (apart from $x'$) is adjacent to a vertex of $P_{23}$. As $G$ is cubic, this also means that $x''$ has no neighbour outside $G'$.  Apply a Kempe exchange on $P_{12}$ and call the resulting colouring~$\gamma$.  By the arguments above, the new $(2,3)$-path $Q_{23}$ (under $\gamma$) that contains $y'$ has vertex set $(V(P_{23}) \cup \{x''\}) \backslash \{x', y, z\}$.  Apply a Kempe exchange on $Q_{23}$. This results in a colouring $\alpha'$ with $\alpha'(y')=\alpha'(z')=2$, hence $\alpha'$ is a desired colouring. This completes the proof of Case~2 and thus of the lemma. \qed

\tikzstyle{vertex}=[circle,draw=black, fill=black, minimum size=5pt, inner sep=1pt]
\tikzstyle{edge} =[draw,-,black,>=triangle 90]

\begin{figure}
\begin{center}
\begin{tikzpicture}[scale=0.4]
   \foreach \pos/\name / \label / \posn / \dist /\labelb /\posnb /\distb in {{(0,0.85)/c1/x/{left}/1/1/right/1}, {(0,3.85)/c2/x'/{above}/1/2/{below left}/-1}, {(-2,3.85)/c3/{}/{left}/1/3/{below right}/-1}, {(2,3.85)/c4/x''/{above right}/-1/1/{below left}/-1}, {(-1,-0.85)/d1/y/{above left}/-1/2/below/1}, {(-3.68,-2.35)/d2/y'/{below left}/-1/3/{above}/1}, {(-4.68,-0.63)/d3/{}/{left}/1/2/left/1}, {(-2.68,-4.08)/d4/{}/{left}/1/1/{below left}/-1}, {(1,-0.85)/e1/z/{above right}/-1/3/below/1}, {(3.68,-2.35)/e2/z'/{below right}/-2/1/above/1}, {(4.68,-0.63)/e3/{}/{left}/1/2/right/1}, {(2.68,-4.08)/e4/{}/{left}/1/3/{below right}/-1}} 
       { \node[vertex] (\name) at \pos {};
       \node [\posn=\dist] at (\name) {$\label$};
       \node [\posnb=\distb] at (\name) {$\labelb$};
            }
\foreach \source/ \dest  in {c1/c2,  d1/d2,  e1/e2,  e1/c1, c1/d1, d1/e1, d2/d4, e2/e4, c2/c3, c2/c4, d2/d3, e2/e3}
       \path[edge, very thick] (\source) --  (\dest);

\draw[edge, very thick] (c3) .. controls (0,5.2)  ..  (c4);
\draw[edge, very thick] (d3) .. controls (-5.24,-3.35)  ..  (d4);
\draw[edge, very thick] (e3) .. controls (5.24,-3.35) .. (e4);
 
\draw[edge, thick, dashdotted] (c2) .. controls (-5, 4) and (-7.04, 3.2) ..  (d2) node [midway, above, sloped] (TextNode) {$(2,3)$-path};
\draw[edge, thick, dashdotted] (c2) .. controls (5, 4) and (7.04, 3.2) ..  (e2) node [midway, above, sloped] () {$(1,2)$-path};
\draw[edge, thick, dashdotted] (d2) .. controls (-0.32, -7.9) and (0.32, -7.9) ..  (e2) node [midway, below, sloped] () {$(1,3)$-path};

\node at (0,-8.7) {colouring $\alpha$};
\node at (16,-8.7) {colouring $\gamma$};
\node at (8,-24.5) {colouring $\alpha'$};

\begin{scope}[xshift=16cm, yshift=0cm]
   \foreach \pos/\name / \label / \posn / \dist /\labelb /\posnb /\distb in {{(0,0.85)/c1/x/{left}/1/2/right/1}, {(0,3.85)/c2/x'/{above}/1/1/{below left}/-1}, {(-2,3.85)/c3/{}/{left}/1/3/{below right}/-1}, {(2,3.85)/c4/x''/{above right}/-1/2/{below left}/-1}, {(-1,-0.85)/d1/y/{above left}/-1/1/below/1}, {(-3.68,-2.35)/d2/y'/{below left}/-1/3/{above}/1}, {(-4.68,-0.63)/d3/{}/{left}/1/2/left/1}, {(-2.68,-4.08)/d4/{}/{left}/1/1/{below left}/-1}, {(1,-0.85)/e1/z/{above right}/-1/3/below/1}, {(3.68,-2.35)/e2/z'/{below right}/-2/2/above/1}, {(4.68,-0.63)/e3/{}/{left}/1/1/right/1}, {(2.68,-4.08)/e4/{}/{left}/1/3/{below right}/-1}} 
       { \node[vertex] (\name) at \pos {};
       \node [\posn=\dist] at (\name) {$\label$};
       \node [\posnb=\distb] at (\name) {$\labelb$};
            }
\foreach \source/ \dest  in {c1/c2,  d1/d2,  e1/e2,  e1/c1, c1/d1, d1/e1, d2/d4, e2/e4, c2/c3, c2/c4, d2/d3, e2/e3}
       \path[edge, very thick] (\source) --  (\dest);

\draw[edge, very thick] (c3) .. controls (0,5.2)  ..  (c4);
\draw[edge, very thick] (d3) .. controls (-5.24,-3.35)  ..  (d4);
\draw[edge, very thick] (e3) .. controls (5.24,-3.35) .. (e4);
 
\draw[edge, thick, dashdotted] (c2) .. controls (-5, 4) and (-7.04, 3.2) ..  (d2) node [midway, above, sloped] (TextNode) {$(2,3)$-path};
\draw[edge, thick, dashdotted] (c2) .. controls (5, 4) and (7.04, 3.2) ..  (e2) node [midway, above, sloped] () {$(1,2)$-path};
\draw[edge, thick, dashdotted] (d2) .. controls (-0.32, -7.9) and (0.32, -7.9) ..  (e2) node [midway, below, sloped] () {$(1,3)$-path};

\end{scope}

\begin{scope}[xshift=8cm, yshift=-16cm]
   \foreach \pos/\name / \label / \posn / \dist /\labelb /\posnb /\distb in {{(0,0.85)/c1/x/{left}/1/2/right/1}, {(0,3.85)/c2/x'/{above}/1/1/{below left}/-1}, {(-2,3.85)/c3/{}/{left}/1/2/{below right}/-1}, {(2,3.85)/c4/x''/{above right}/-1/3/{below left}/-1}, {(-1,-0.85)/d1/y/{above left}/-1/1/below/1}, {(-3.68,-2.35)/d2/y'/{below left}/-1/2/{above}/1}, {(-4.68,-0.63)/d3/{}/{left}/1/3/left/1}, {(-2.68,-4.08)/d4/{}/{left}/1/1/{below left}/-1}, {(1,-0.85)/e1/z/{above right}/-1/3/below/1}, {(3.68,-2.35)/e2/z'/{below right}/-2/2/above/1}, {(4.68,-0.63)/e3/{}/{left}/1/1/right/1}, {(2.68,-4.08)/e4/{}/{left}/1/3/{below right}/-1}} 
       { \node[vertex] (\name) at \pos {};
       \node [\posn=\dist] at (\name) {$\label$};
       \node [\posnb=\distb] at (\name) {$\labelb$};
            }
\foreach \source/ \dest  in {c1/c2,  d1/d2,  e1/e2,  e1/c1, c1/d1, d1/e1, d2/d4, e2/e4, c2/c3, c2/c4, d2/d3, e2/e3}
       \path[edge,  very thick] (\source) --  (\dest);

\draw[edge, very thick] (c3) .. controls (0,5.2)  ..  (c4);
\draw[edge, very thick] (d3) .. controls (-5.24,-3.35)  ..  (d4);
\draw[edge, very thick] (e3) .. controls (5.24,-3.35) .. (e4);
 
\draw[edge, thick, dashdotted] (c2) .. controls (-5, 4) and (-7.04, 3.2) ..  (d2) node [midway, above, sloped] (TextNode) {$(2,3)$-path};
\draw[edge, thick, dashdotted] (c2) .. controls (5, 4) and (7.04, 3.2) ..  (e2) node [midway, above, sloped] () {$(1,2)$-path};
\draw[edge, thick, dashdotted] (d2) .. controls (-0.32, -7.9) and (0.32, -7.9) ..  (e2) node [midway, below, sloped] () {$(1,3)$-path};

\end{scope}

\end{tikzpicture}
\end{center}
\caption{Colourings of $G'$ in the proof of Lemma~\ref{l2}.  The dotted lines indicate paths of arbitrary length.}\label{fig:4}
\end{figure}
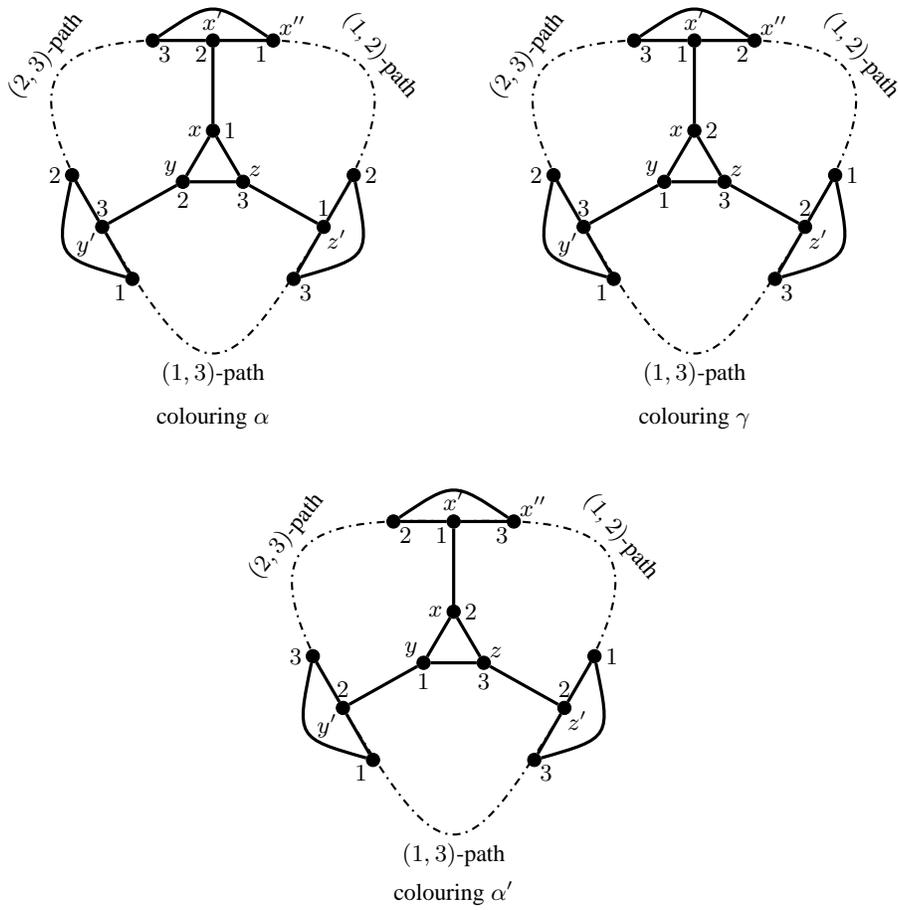

\subsection{Proof of Lemma~\ref{l3}}\label{a-c}

We first need another lemma.  

\begin{lemma} \label{lem-W}
Let $W$ be a set of three vertices in a 3-connected cubic graph $G$ such that every 3-colouring of $G$ colours at least two vertices of $W$ alike.  Let every 3-colouring~$c$ of $G$ be Kempe equivalent to another 3-colouring $c'$ such that $c$ and $c'$ colour alike distinct pairs of $W$.  Then $C_3(G)$ is a Kempe class.
\end{lemma}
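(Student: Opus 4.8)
The plan is to partition $C_3(G)$ according to which pair of $W$ is made monochromatic, to show each part lies in a single Kempe class, and then to glue these classes together using the second hypothesis. Write $W=\{u_1,u_2,u_3\}$; for $i\in\{1,2,3\}$ let $P_i=W\setminus\{u_i\}$ and let $\mathcal{C}_i$ be the set of $3$-colourings of $G$ that give the two vertices of $P_i$ the same colour. The first hypothesis is exactly the statement $C_3(G)=\mathcal{C}_1\cup\mathcal{C}_2\cup\mathcal{C}_3$. So it suffices to prove: (i)~each non-empty $\mathcal{C}_i$ is contained in a single Kempe class $K_i$; and (ii)~all of the classes $K_i$ that occur coincide. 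Then $C_3(G)\subseteq K_i$ for that common class, which gives the lemma.

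For step~(i) I would use Lemma~\ref{lidentify1}. Take $f,g\in\mathcal{C}_i$ and put $\{x,y\}=P_i$; then $f(x)=f(y)$ and $g(x)=g(y)$, and $x$ and $y$ have a common neighbour in $G$ — here I would use (or make explicit as a hypothesis) that $W$ is the neighbourhood of a vertex of $G$, so that that vertex is a common neighbour of every pair inside $W$. Hence $f$ and $g$ \emph{match}, and since $G$ is $3$-connected and cubic (so of maximum degree~$3$), Lemma~\ref{lidentify1} gives $f\sim_3 g$. So the colourings in $\mathcal{C}_i$ are pairwise Kempe equivalent, as required.

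For step~(ii), first note that if some $3$-colouring assigns all of $W$ the same colour, then it lies in $\mathcal{C}_1\cap\mathcal{C}_2\cap\mathcal{C}_3$, so $K_1=K_2=K_3$ and (i) finishes the proof. Otherwise every $3$-colouring of $G$ colours \emph{exactly one} pair of $W$ alike (colouring two pairs of the triple $W$ alike would colour all three vertices alike). Now for each $i$ with $\mathcal{C}_i\neq\emptyset$ choose $f\in\mathcal{C}_i$; its unique monochromatic pair is $P_i$, so by the second hypothesis there is a colouring $g$ with $f\sim_3 g$ whose unique monochromatic pair $P_b$ has $b\neq i$. Then $g\in\mathcal{C}_b$, so $\mathcal{C}_b\neq\emptyset$, and since $f\in\mathcal{C}_i\subseteq K_i$, $g\in\mathcal{C}_b\subseteq K_b$ and $f\sim_3 g$, we get $K_i=K_b$ with $b\neq i$. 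In particular at least two of $\mathcal{C}_1,\mathcal{C}_2,\mathcal{C}_3$ are non-empty, and a short case check finishes the argument: if exactly two are non-empty the displayed relation forces their classes to be equal; if all three are non-empty then, say, $K_1=K_2$, and then $K_3$ equals $K_1$ or $K_2$, so all three classes coincide.

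The step I expect to be the main obstacle is the ``matching'' verification inside step~(i): to invoke Lemma~\ref{lidentify1} one needs the two vertices of the relevant pair of $W$ to have a common neighbour. This is exactly why the lemma is meant to be applied with $W$ equal to the neighbourhood of a vertex — for instance the three leaves of an induced claw when the lemma is used in the proof of Lemma~\ref{l3} — and I would either add this to the hypotheses or record it as the property actually being used. The remaining ingredients (the three-way partition of $C_3(G)$ and the merging of the at most three Kempe classes) are routine bookkeeping.
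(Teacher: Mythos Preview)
Your argument is correct and is essentially the paper's proof reorganised: both rely on Lemma~\ref{lidentify1} (matching colourings are Kempe equivalent) together with the second hypothesis to move between monochromatic pairs of $W$; the paper simply runs the case analysis directly on a given pair $\alpha,\beta$ (modifying each at most once to force a match) rather than phrasing it as partition-and-glue. You are also right to flag the common-neighbour requirement needed to invoke Lemma~\ref{lidentify1} --- the paper's proof uses it tacitly, relying on the fact that in its sole application (Lemma~\ref{l3}) the set $W$ is the three leaves of an induced claw.
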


\begin{proof}
Let $\alpha$ and $\beta$ be two 3-colourings of $G$.  To prove the lemma we show that \mbox{$\alpha \sim_3 \beta$}.  By Lemma~\ref{lidentify1}, it is sufficient to find a matching pair of colourings that are Kempe equivalent to $\alpha$ and $\beta$ respectively.

Let $W=\{x,y,z\}$.  We can assume that $\alpha(x)=\alpha(y)$.  If $\beta(x)=\beta(y)$, then $\alpha$ and $\beta$ match and we are done. So we can instead assume that $\beta(y)=\beta(z)$.  Let $\alpha'$ be a 3-colouring Kempe equivalent  to $\alpha$ that colours alike a different pair of  $W$.   If $\alpha'(y) = \alpha'(z)$, then $\alpha'$ and $\beta$ match.  Otherwise we must have that $\alpha'(x)=\alpha'(z)$.   Let~$\beta'$ be a 3-colouring Kempe equivalent  to $\beta$ that colours alike a different pair of  $W$.   So $\beta'(x) \in \{\beta'(y), \beta'(z) \}$ and $\beta'$ matches either $\alpha$ or $\alpha'$. \qed
\end{proof}

We restate Lemma~\ref{l3} before we present its proof.

\medskip
\noindent
{\bf Lemma~\ref{l3} (restated).}
{\it If $G$ is a $3$-connected cubic graph that is not claw-free then $C_3(G)$ is a Kempe class.}

\medskip

\noindent
{\it Proof.}
Note that if a vertex has three neighbours coloured alike it is a single-vertex Kempe chain.  We will write that such a vertex can be recoloured to refer to the exchange of such a chain.  

We make repeated use of Lemma~\ref{lidentify1}: two colourings are Kempe equivalent if they match.

Let $C$ be a claw in $G$ with vertex labels as in Figure~\ref{fig:1}.  Note that in every 3-colouring of $G$, two of $s$, $u$ and $v$ are coloured alike since they have a common neighbour.  If some fixed pair of $u$, $v$ and $s$ is coloured alike by every 3-colouring of $G$, then every pair of colourings matches and we are done. So let $\alpha$ be a 3-colouring of $G$ and assume that $\alpha(u)=\alpha(v)=1$ and that there are colourings for which $u$ and $v$ have distinct colourings, or, equivalently, colourings for which $s$ has the same colour as either $u$ or~$v$.  By Lemma~\ref{lem-W}, it is sufficient to find such a 3-colouring  that is Kempe equivalent to~$\alpha$. Our approach is to divide the proof into a number of cases, and, in each case, start from~$\alpha$ and make a number of Kempe changes until a colouring in which $s$ agrees with either $u$ or $v$ is obtained.  We will denote such a colouring $\omega$ to indicate a case is complete.

First some simple observations.  If $\alpha(s)=1$, then let $\omega=\alpha$ and we are done.  So we can assume instead that $\alpha(s)=2$ (and so, of course, $\alpha(w)=3$).    If it is possible to recolour one of $u$, $v$ or $s$, then we can let $\omega$ be the colouring obtained.  Thus we can assume now that each vertex of $u$, $v$ and $s$ has two  neighbours that are not coloured alike.

For a colouring $c$, vertex $x$, and colours $a$ and $b$ let $F_{c,x}^{ab}$ denote the $(a, b)$-component at $s$ under $c$.  We can assume that $F_{\alpha,s}^{12}$ contains both $u$ and $v$ as otherwise exchanging $F_{\alpha,s}^{12}$ results in a colouring in which~$s$ agrees with either $u$ or $v$.   

Let $N(u) = \{w, u_1, u_2\}$, $N(v) = \{w, v_1, v_2\}$, and $N(s) = \{w, s_1, s_2\}$.  Note that the vertices $u_1, u_2, v_1, v_2, s_1, s_2$ are not necessarily distinct.  

    \medskip
 \noindent
 {\bf Case 1.} $\alpha(u_1)\not=\alpha(u_2)$, $\alpha(v_1)\not=\alpha(v_2)$ and $\alpha(s_1)\not=\alpha(s_2)$. \\
   \noindent So each of $u$, $v$ and $s$ has degree 1 in $F_{\alpha,s}^{12}$ and therefore $F_{\alpha,s}^{12}$ has at least one vertex of degree 3.  Let $x$ be the vertex of degree 3 in $F_{\alpha,s}^{12}$ that is closest to $u$ and let $\alpha'$ be the colouring obtaining by recolouring $x$.  Then $u$ is not in $F_{\alpha',s}^{12}$ which can be exchanged to obtain $\omega$.

    \medskip
 \noindent
 {\bf Case 2.} $\alpha(s_1) = \alpha(s_2)$.  \\
   \noindent   Then $\alpha(s_1) = \alpha(s_2) = 1$ else $\omega$ can be obtained by recolouring $s$.
   
   \smallskip
   \noindent
   \textbf{Subcase 2.1:} $\alpha(u_1) = \alpha(u_2)$ or $\alpha(v_1) = \alpha(v_2)$.  \\
   \noindent   The two cases are equivalent so we consider only the first.  We have $\alpha(u_1) = \alpha(u_2) = 2$ else $u$ is not in $F_{\alpha,s}^{12}$.  Note that $F_{\alpha,s}^{23}$ consists only of $s$ and $w$. If $F_{\alpha,s}^{23}$ is exchanged, $u$ has three neighbours coloured $2$, and can be recoloured to obtain $\omega$ (as $u$ and $s$ are both now coloured 3).
   
   \smallskip
   \noindent
   \textbf{Subcase 2.2:} $\alpha(u_1) \not= \alpha(u_2)$ and $\alpha(v_1)\not=\alpha(v_2)$. \\
   \noindent We can assume that $\alpha(u_1) = \alpha(v_1)= 2$, and $\alpha(u_2) = \alpha(v_2) = 3$.  

In this case, we take a slightly different approach.  Let $\omega$ now be some fixed 3-colouring  with $\omega(s) \in \{\omega(u),\omega(v)$\}.We show that $\alpha \sim_3 \omega$ by making Kempe changes from $\alpha$ until a colouring that matches $\omega$ (or a colouring obtained from $\omega$ by a Kempe change) is reached.  

Let $\{a,b,c\}=\{1,2,3\}$. If $\omega(s_1)=\omega(s_2)$, then $\omega$ matches $\alpha$ (recall $\alpha(s_1)=\alpha(s_2)$ in this case). So assume that $\omega(s_1)=a$ and $\omega(s_2)=b$.  Then $\omega(s)=c$, and we can assume, without loss of generality, that $\omega(w)=a$.  Note that we can assume that $\mbox{$\omega(u) \neq \omega(v)$}$ else $\alpha$ and $\omega$ match and we are done.  So, as $u$ and $v$ are symmetric under~$\alpha$, we can assume that $\omega(u)=b$ and $\omega(v)=c$. If $\omega(u_2)=a$ or $\mbox{$\omega(v_2)=a$}$, then, again, $\alpha$ and $\omega$ match  (recall that $\alpha(w) = \alpha(u_2) = \alpha(v_2)$) so we assume otherwise (noting that this implies $\omega(u_2)=c$ and $\omega(v_2)=b$) and consider two cases.  For convenience, we first illustrate our current knowledge of $\alpha$ and $\omega$ in Figure~\ref{fig:2}.  (Though it is not pertinent in this case, we again observe that the six vertices of degree 1 in the illustraton might not, in fact, be distinct.)

\tikzstyle{vertex}=[circle,draw=black,  minimum size=13pt, inner sep=0pt]
\tikzstyle{edge} =[draw,-,black,>=triangle 90]

\begin{figure}
\begin{center}
\begin{tikzpicture}[scale=0.65]

\node at (0,-4.15) {colouring $\alpha$};
\node at (10,-4.15) {colouring $\omega$};

 \begin{scope}[xshift=0cm]
   \foreach \pos/\name / \label / \posn / \dist / \vertexlabel in {{(0,3)/c1/1/above/5/u}, {(0,0)/c2/3/{above right}/3/w}, {(2.6,-1.5)/c3/1/{right}/5/v}, {(-2.6,-1.5)/c4/2/{left}/5/s}, {(-1.74,3)/c5/2/{above left}/3/{u_1}}, {(1.74,3)/c6/3/{above right}/3/{u_2}}, {(-1.74,-3)/c7/1/{above right}/3/{s_2}}, {(-3.46,0)/c8/1/{above right}/3/{s_1}}, {(1.74,-3)/c9/2/{above left}/3/{v_1}}, {(3.46,0)/c10/3/{above left}/3/{v_2}}}
       { \node[vertex] (\name) at \pos {$\vertexlabel$};
       \node [\posn=\dist] at (\name) {$\label$};
       }

\foreach \source/ \dest  in {c2/c1,  c2/c4, c2/c3, c1/c5, c1/c6, c4/c7, c4/c8, c3/c9, c3/c10}
       \path[edge, black!50!white,  thick] (\source) --  (\dest);
       
\end{scope}

 \begin{scope}[xshift=10cm]
   \foreach \pos/\name / \label / \posn / \dist / \vertexlabel in {{(0,3)/c1/b/above/5/u}, {(0,0)/c2/a/{above right}/3/w}, {(2.6,-1.5)/c3/c/{right}/5/v}, {(-2.6,-1.5)/c4/c/{left}/5/s}, {(-1.74,3)/c5/{}/{above left}/3/{u_1}}, {(1.74,3)/c6/c/{above right}/3/{u_2}}, {(-1.74,-3)/c7/b/{above right}/3/{s_2}}, {(-3.46,0)/c8/a/{above right}/3/{s_1}}, {(1.74,-3)/c9/{}/{above left}/3/{v_1}}, {(3.46,0)/c10/b/{above left}/3/{v_2}}}
       { \node[vertex] (\name) at \pos {$\vertexlabel$};
       \node [\posn=\dist] at (\name) {$\label$};
       }

\foreach \source/ \dest  in {c2/c1,  c2/c4, c2/c3, c1/c5, c1/c6, c4/c7, c4/c8, c3/c9, c3/c10}
       \path[edge, black!50!white,  thick] (\source) --  (\dest);
      
\end{scope}

\end{tikzpicture}
\end{center}
\caption{The colourings of Subcase 2.2 of Lemma~\ref{l3}.}\label{fig:2}
\end{figure}
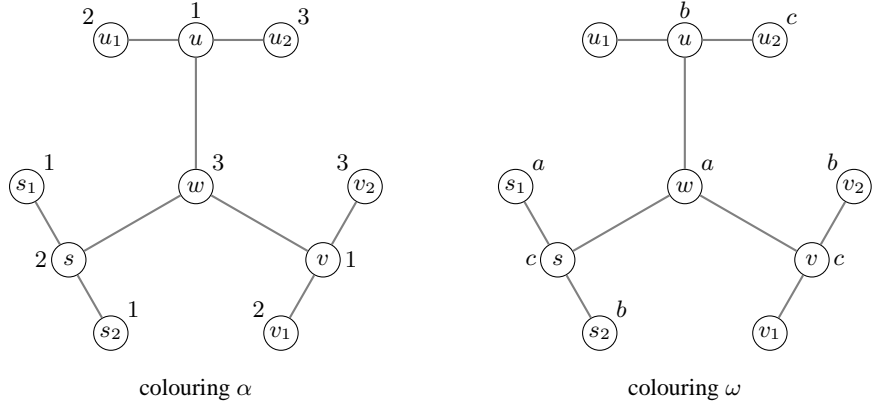

   \smallskip 
   \noindent
   \textbf{Subcase 2.2.1:} $\omega(w) =a  \in \{\omega(u_1), \omega(v_1) \}$. \\
    \noindent Notice that $F_{\alpha,s}^{23}$ contains only $s$ and $w$.  If it is exchanged then a colouring is obtained where $w$, $u_1$ and $v_1$ are coloured alike and this colouring matches $\omega$.
    
    \smallskip 
    \noindent
    \textbf{Subcase 2.2.2:} $\omega(w) =a \not\in \{\omega(u_1), \omega(v_1) \}$. \\
     \noindent So $\omega(u_1)=c$ and $\omega(v_1)=b$.  Thus $F_{\omega,w}^{ab}$ contains only $u$ and $w$, and the colouring obtained by its exchange matches $\alpha$ as $w$ and $v_1$ are both coloured $b$.

     \medskip
 \noindent
 {\bf Case 3.} $\alpha(u_1) = \alpha(u_2)$, $\alpha(v_1) \not= \alpha(v_2)$, and $\alpha(s_1)\not= \alpha(s_2)$. \\
     \noindent   If $\alpha(u_1) = \alpha(w)$ then the three neighbours of $u$ are coloured alike and it can be recoloured to obtain $\omega$.  So suppose $\alpha(u_1) = \alpha(u_2)= 2$. We may assume that $\alpha(s_1) = 1$, $\alpha(s_2) = \alpha(v_2) = 3$, and $\alpha(v_1) = 2$; see the illustration of Figure~\ref{fig:3}.

\tikzstyle{vertex}=[circle,draw=black,  minimum size=13pt, inner sep=0pt]
\tikzstyle{edge} =[draw,-,black,>=triangle 90]

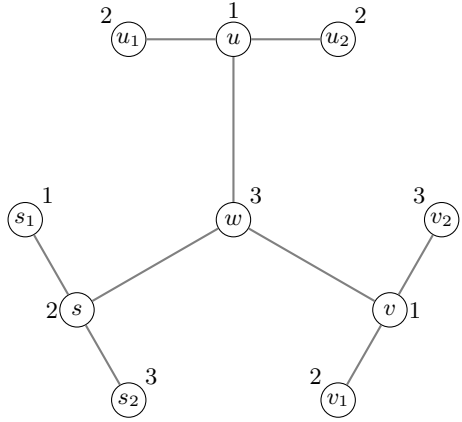
\begin{figure}
\begin{center}
\begin{tikzpicture}[scale=0.8]

   \foreach \pos/\name / \label / \posn / \dist / \vertexlabel in {{(0,3)/c1/1/above/5/u}, {(0,0)/c2/3/{above right}/3/w}, {(2.6,-1.5)/c3/1/{right}/4/v}, {(-2.6,-1.5)/c4/2/{left}/4/s}, {(-1.74,3)/c5/2/{above left}/3/{u_1}}, {(1.74,3)/c6/2/{above right}/3/{u_2}}, {(-1.74,-3)/c7/3/{above right}/3/{s_2}}, {(-3.46,0)/c8/1/{above right}/3/{s_1}}, {(1.74,-3)/c9/2/{above left}/3/{v_1}}, {(3.46,0)/c10/3/{above left}/3/{v_2}}}
       { \node[vertex] (\name) at \pos {$\vertexlabel$};
       \node [\posn=\dist] at (\name) {$\label$};
       }
\foreach \source/ \dest  in {c2/c1,  c2/c4, c2/c3, c1/c5, c1/c6, c4/c7, c4/c8, c3/c9, c3/c10}
       \path[edge, black!50!white,  thick] (\source) --  (\dest);
       
\end{tikzpicture}
\end{center}
\caption{The colouring $\alpha$ of Case 3 of Lemma~\ref{l3}.}\label{fig:3}
\end{figure}

\smallskip

\noindent We continue to assume that $F_{\alpha,s}^{12}$ contains $u$ and $v$ and note that $s$ and $v$ have degree 1 therein. 

    \smallskip 
    \noindent
    \textbf{Subcase 3.1:} $F_{\alpha,s}^{12}$ is not a path. \\
\noindent  Let $t$ be vertex of degree $3$ closest to $s$ in $F_{\alpha,s}^{12}$.  Then $t$ can be recoloured to obtain a colouring $\alpha'$ such that $F_{\alpha',s}^{12}$ does not contain $v$.  Exchanging $F_{\alpha',s}^{12}$, we obtain $\omega$.

    \smallskip 
    \noindent
    \textbf{Subcase 3.2:} $F_{\alpha,s}^{12}$ is a path. \\
\noindent  Note that $F_{\alpha,s}^{12}$ is a path from $s$ to $v$ through $s_1$ and $u$.

    \smallskip 
    \noindent
    \textbf{Subcase 3.2.1:} $F_{\alpha,s_2}^{13}$ is a path from $s_1$ to $s_2$. \\
\noindent  Note that $F_{\alpha,u}^{13} \neq F_{\alpha,s_2}^{13}$ since if $F_{\alpha,u}^{13}$ is a path then $u$ would be an endvertex coloured~1 implying $u=s_1$ contradicting that $C$ is a claw.  As $G$ is cubic a vertex can belong to both $F_{\alpha,s}^{12}$ and $F_{\alpha,s_2}^{13}$ if it is an endvertex of one of them, and we note that $s_1$ is the only such vertex.  
     
Let $\alpha'$ be the colouring obtained from $\alpha$ by the exchange of $F_{\alpha,s_2}^{13}$.  If $s \not\in F_{\alpha',v}^{12}$, then let $\omega$ be the colouring obtained by the further exchange of $F_{\alpha',v}^{12}$.

Otherwise, $F_{\alpha',v}^{12}=F_{\alpha',s}^{12}$, $s$ and $v$ each have degree 1 therein, and we can assume it is a path (else, as in Subcase 3.1, there is a vertex of degree 3 that can be recoloured to obtain $\alpha''$ and $F_{\alpha'',s}^{12}$ does not contain $v$ and can be exchanged to obtain $\omega$).  We can also assume that $F_{\alpha',s}^{12}$ contains $F_{\alpha,s}^{12} \setminus \{s_1\}$:  if not, then $F_{\alpha, s_2}^{13} \backslash \{s_1, s_2\} \cap F_{\alpha, v}^{12} \not= \emptyset$ (recall that $F_{\alpha,s}^{12}$ is a path from $s$ to $v$ through $s_1$ and $u$) but their common vertices would have degree $4$. Thus, in particular, $F_{\alpha',s}^{12}$ contains $u$ and the vertex~$t$ at distance 2 from $s$ in~$F_{\alpha,s}^{12}$.

As $t$ is not an endvertex in $F_{\alpha',s}^{12}$, $s_1$ is its only neighbour coloured 3 under $\alpha'$.  So $F_{\alpha',w}^{23}$ contains four vertices: $w$, $s$, $s_1$ and $t$. Let $\alpha''$ be the colouring obtained from $\alpha'$ by the exchange of $F_{\alpha',w}^{23}$. If $t \not\in \{u_1, u_2\}$, then $u$ has three neighbours with colour $2$ with $\alpha''$ and so can be recoloured to obtain $\omega$.  Otherwise the conditions of  Case 1 are now met.
 
    \smallskip 
    \noindent
    \textbf{Subcase 3.2.2:} $F_{\alpha,s_2}^{13}$ is not a path from $s_1$ to $s_2$. \\
\noindent  If $s_1 \notin F_{\alpha,s_2}^{13}$, then the exchange of $F_{\alpha,s_2}^{13}$ gives a colouring in which $s_1$ and $s_2$ are coloured alike (the colour of $s$ is not affected by the exchange and either both or neither of $u$ and $v$ change colour).  Thus Case 2 can now be used.

So we can assume that $s_1 \in F_{\alpha,s_2}^{13}$ has degree 1 in $F_{\alpha,s_2}^{13}$ (recall that $s_1$ has degree~2 in $F_{\alpha,s}^{12}$).  If $s_2$ has degree 2 in $F_{\alpha,s_2}^{13}$, then $F_{\alpha, s}^{23}$ contains only $w$, $s$ and $s_2$.  If it is exchanged, $u$ has three neighbours with colour $2$ and can be recoloured to $\omega$.  

Thus $s_1$ and $s_2$ both have degree 1 in $F_{\alpha,s_2}^{13}$.  Let $x$ be the vertex of $F_{\alpha,s_2}^{13}$ closest to~$s_2$. Then $x$ can be recoloured to obtain a colouring $\alpha'$ such that $F_{\alpha',s_2}^{13}$  does not contain $s_1$.  Exchanging $F_{\alpha',s_2}^{13}$ again takes us to Case 2. This completes Case 3.

By symmetry, we are left to consider the following case to complete the proof of the lemma.

     \medskip
 \noindent
 {\bf Case 4.} $\alpha(u_1) = \alpha(u_2)$, $\alpha(v_1) = \alpha(v_2)$, and $\alpha(s_1)\not= \alpha(s_2)$. \\ 
     \noindent   If $\alpha(v_1) = \alpha(v_2) = 3$, then $v$ can be recoloured to obtain $\omega$.  So we can assume that  $\alpha(v_1) = \alpha(v_2) = 2$, and, similarly, that $\alpha(u_1) = \alpha(u_2) = 2$.  We can also assume that $F_{\alpha,s}^{23}$ is a path since otherwise the vertex of degree 3 closest to $s$ can be recoloured.  Define $S = \{u_1, u_2, v_1, v_2\}$. We distinguish two cases.   
     
     \smallskip
     \noindent
     \textbf{Subcase 4.1:} $|S \cap F_{\alpha,s}^{23}| \geq 2$. \\
     \noindent     As $F_{\alpha,s}^{23}$ is a path and $w$ is an endvertex, one vertex of $S$, say $v_1$, has degree $2$ in $F_{\alpha,s}^{23}$. Consider $F_{\alpha,w}^{13}$: it consists only of vertices $w$, $u$, and $v$.  After it is exchanged, $v_1$ has three neighbours with colour $3$ and recolouring $v_1$ allows us to apply Case 3. 
     
     \smallskip
     \noindent
     \textbf{Subcase 4.2:} $|S \cap F_{\alpha,s}^{23}| \leq 1$. \\
    \noindent    It follows, without loss of generality, that $\{u_1, u_2\} \cap F_{\alpha,s}^{23} = \emptyset$.  Exchange $F_{\alpha, u_1}^{23}$and $F_{\alpha, u_2}^{23}$ (which might be two distinct components or just one) to obtain a colouring $\alpha'$.  As $w \in F_{\alpha',s}^{23}$ (and hence $w \not\in F_{\alpha, u_1}^{23} \cup F_{\alpha, u_2}^{23}$), every neighbour of $u$ is coloured 3 and it can be recoloured to obtain $\omega$. This completes Case 4 and the proof of Lemma~\ref{l3}.
     \qed

\bibliography{bibliography}{}
\bibliographystyle{abbrv}

\end{document}